% Template for the submission to:
%   The Annals of Probability           [aop]
%   The Annals of Applied Probability   [aap]
%   The Annals of Statistics            [aos] 
%   The Annals of Applied Statistics    [aoas]
%   Stochastic Systems                  [ssy]
%
%Author: In this template, the places where you need to add information
%        (or delete line) are indicated by {???}.  Mostly the information
%        required is obvious, but some explanations are given in lines starting
%Author:
%All other lines should be ignored.  After editing, there should be
%no instances of ??? after this line.

% use option [preprint] to remove info line at bottom
% journal options: aop,aap,aos,aoas,ssy
% natbib option: authoryear
\documentclass[aos]{imsart}

%\usepackage{amsthm,amsmath,natbib}
%\RequirePackage[colorlinks,citecolor=blue,urlcolor=blue]{hyperref}

% provide arXiv number if available:
%\arxiv{arXiv:0000.0000}

\usepackage{color}
\usepackage{verbatim}
\usepackage{float}

\usepackage{amsmath}
\usepackage{amssymb}
\usepackage{graphicx}

\usepackage{times}
\usepackage{bm}
\usepackage{natbib}
\usepackage[plain,noend]{algorithm2e}
\usepackage{amsthm}

\makeatletter

\floatstyle{ruled}
\newfloat{algorithm}{tbp}{loa}
\providecommand{\algorithmname}{Algorithm}
\floatname{algorithm}{\protect\algorithmname}

%%%%%%%%%%%%%%%%%%%%%%%%%%%%%% Textclass specific LaTeX commands.

\numberwithin{equation}{section}

%%%%%%%%%%%%%%%%%%%%%%%%%%%%%% User specified LaTeX commands.

\usepackage{algorithmic}

\newcommand{\diff}{\,\mathrm{d}}

\newcommand{\ascv}{\,\mathrm{a.s.}\,}

\newcommand{\prob}{\mathrm{pr}}

 \newcommand\independent{\protect\mathpalette{\protect\independenT}{\perp}}
\def\independenT#1#2{\mathrel{\rlap{$#1#2$}\mkern2mu{#1#2}}}

\bibpunct{(}{)}{;}{a}{,}{;}

\makeatother

\newtheorem{theorem}{Theorem}
\newtheorem{lemma}{Lemma}

\theoremstyle{definition}
\newtheorem{definition}{Defintition}
\newtheorem{example}{Example}
\newtheorem{remark}{Remark}
\newtheorem{assumption}{Assumption}

% put your definitions there:
\startlocaldefs
\endlocaldefs

\begin{document}

\begin{frontmatter}

% "Title of the paper"
\title{Approximate Bayesian Computation via Sufficient Dimension Reduction}
\runtitle{ABC via SDR}
\begin{aug}
% indicate corresponding author with \corref{}
% \author{\fnms{John} \snm{Smith}\corref{}\ead[label=e1]{smith@foo.com}\thanksref{t1}}
% \thankstext{t1}{Thanks to somebody} 
% \address{line 1\\ line 2\\ printead{e1}}
% \affiliation{Some University}

\author{\fnms{Xiaolong} \snm{Zhong}\ead[label=e1]{zhongxl@stat.ufl.edu}}

\and
\author{\fnms{Malay} \snm{Ghosh} \thanksref{t1} \ead[label=e2]{ghoshm@stat.ufl.edu}}
%\address{ \printead{e2}}
\affiliation{University of Florida}

\thankstext{t1}{Supported by an NSF grant}

\address{Department of Statistics \\ 
University of Florida\\
32611, Florida, U.S.A\\
\printead{e1} \\
\phantom{E-mail:\ }  \printead*{e2}}

\runauthor{X. Zhong and M. Ghosh}
\end{aug}

\begin{abstract}
Approximate Bayesian computation (ABC) has gained popularity in recent
years owing to its easy implementation, nice interpretation and good
performance. Its advantages are more visible when one encounters complex
models where maximum likelihood estimation as well as Bayesian analysis
via  Markov chain Monte Carlo demand prohibitively large amount of
time. This paper examines properties of ABC both from a theoretical
as well as from a computational point of view. We consolidate the ABC
theory by proving theorems related to its limiting behaviour. In particular,
we consider partial posteriors, which serve as the first step towards
approximating the full posteriors. Also, a new semi-automatic algorithm
of ABC is proposed using sufficient dimension reduction (SDR) method.
SDR has primarily surfaced in the frequentist literature. But we have
demonstrated in this paper that it has connections with ABC as well.
\end{abstract}

\begin{keyword}[class=MSC]
\kwd[Primary ]{62F15}
%\kwd{62E20}
\kwd[; secondary ]{62E20}
\end{keyword}

\begin{keyword}
\kwd{Approximate Bayesian computation}
\kwd{Bernstein--von Mises theorem}
\kwd{Inferential model}
\kwd{Nonlinear sufficient dimension reduction}
\end{keyword}

\end{frontmatter}

\section{Introduction}

There are two main objectives of this article. First, we want to provide
some theoretical results related to the currently emerging topic of
approximate Bayesian computation (ABC). The second is to show some
connectivity between ABC and another important emerging topic of research,
namely, sufficient dimension reduction (SDR). While the latter has
surfaced primarily in the\textcolor{black}{{} frequentist's}{{}
}domain of research, it is possible to tie it with ABC as well. In
particular, we want to show how ABC can be carried through nonlinear
SDR. 

Modern science invokes more and more Byzantine stochastic models,
such as stochastic kinetic network (\cite{wilkinson2011stochastic}),
differential equation system (\cite{picchini2014inference}) and
multi-hierarchical model (\cite{jasra2012filtering}), whose computational
complexity and intractability challenge the application of classical
statistical inference. Traditional maximum likelihood methods will
malfunction when the evaluation of likelihoods becomes slow and inaccurate.
Lack of analytical form of the likelihood also \textcolor{black}{undermines
} the usage of Bayesian inferential tools, such as Markov chain Monte
Carlo (MCMC), Laplace approximation (\cite{tierney1986accurate}),
variational Bayes (\cite{jaakkola2000bayesian}) and posterior expansion
(\cite{johnson1970asymptotic}, Zhong and Ghosh%
\begin{comment}
need ask
\end{comment}
). 

The ABC methodology \textcolor{black}{stems from} the observation
that the interpretability of the candidate model usually leads to
an applicable sampler of data given parameters, and ingeniously circumvents
the evaluation of likelihood functions. The idea behind ABC can be
summarized as follows: 
\begin{algorithm}[H]
\begin{enumerate}
\item Sample parameters $\theta_{i}$ from the prior distribution $\pi\left(\theta\right)$;
\item Sample data $Z_{i}$ based on the model $f\left(z\mid\theta_{i}\right)$;
\item Compare the simulated data $Z_{i}$ and the observed data $X_{i,\mathrm{obs}}$,
to accept or reject $\theta_{i}$.
\end{enumerate}
\protect\caption{Idea of ABC}

\end{algorithm}
 \cite{rubin1984bayesianly} first mentioned this idea and \cite{tavare1997inferring}
proposed the first version of ABC, while studying population genetics.
The prototype of ABC in recent research was given in \cite{pritchard1999population},
where the comparison of two data sets was simplified to a comparison
of summary statistics $S$ and the accept-reject decision was made
up to a certain error tolerance. 
\begin{algorithm}[h]
\begin{enumerate}
\item Sample parameters $\theta_{i}$ from the prior distribution $\pi\left(\theta\right)$;
\item Sample data $Z_{i}$ based on the model $f\left(z\mid\theta_{i}\right)$;
\item Accept $\theta_{i}$ if $\rho\left(S\left(Z_{i}\right),S\left(X_{\mathrm{obs}}\right)\right)\le\varepsilon$,{{}
}\textcolor{black}{for some metric $\rho$}{.}
\end{enumerate}
\protect\caption{\label{alg:Prichard-ABC}Prichard's Modified ABC}

\end{algorithm}
We can view this algorithm as a modified version of accept-reject
algorithm (\cite{robert2013monte}). The posterior is sampled by
altering the frequency of the proposal distribution, that is, the
prior. Now the full posterior distribution is approximated by the
following two steps (\cite{fearnhead2012constructing}): 
\begin{equation}
\pi\left(\theta\mid X_{\mathrm{obs}}\right)\approx\pi\left(\theta\mid S_{\mathrm{obs}}\right)\approx\pi\left(\theta\mid S_{\mathrm{sim}}\in O\left(S_{\mathrm{obs}},\varepsilon\right)\right),\label{eq:two-step-approx-abc}
\end{equation}
where $O\left(S_{\mathrm{obs}},\varepsilon\right)$ means a neighborhood
defined by the comparison measure $\rho$ and tolerance level $\varepsilon$.
We may note that the first approximation is exact when $S$ is sufficient.
Allowing the summary statistics to vary in an acceptable range sacrifices
a little accuracy in exchange for a significant improvement in computational
efficiency, which \textcolor{black}{makes } the algorithm more practical
and user-friendly. 

Pursuant to Algorithm \ref{alg:Prichard-ABC}, there are multiple
generalizations in the statistical literature. \cite{marjoram2003markov}
\textcolor{black}{introduced } MCMC-ABC algorithm to concentrate the
samples in high posterior probability region, thereby increasing the
accept rate. Noisy ABC, proposed by \cite{wilkinson2013approximate},
makes use of all the prior samples by assigning kernel weights instead
of hard-threshold accept-reject mechanism and hence reduces the computational
burden. This perspective is corroborated in \cite{fearnhead2012constructing}
by convergence of Bayesian estimators. When the dependence structure
between hierarchies is intractable, ABC filtering technique innovated
by \cite{jasra2012filtering} comes to the rescue. Later in \cite{dean2014parameter},
a consistency argument is established for the specific case of hidden
Markov models. Moreover, many ABC algorithms above can be easily coded
in a parallel way, and hence take advantages of modern computer structures.
This feature makes ABC algorithms extremely time-saving in comparison with long-established,
looping-based MCMC and MLE algorithms. 

Despite the fruitful results on ABC both from applied and theoretical
points of view,  there exist only a handful of papers which
focus on the effect of the choice of summary statistics on the approximation
quality. The quintessential case is when the summary statistics are sufficient,
and the resultant ABC sampler produces exact samples from the true
posterior distribution when $\varepsilon$ goes to zero. Nevertheless,
in a labyrinthine model, it is difficult to extract sufficient statistics,
except for some very special cases, such as exponential random graph
models (e.g. \cite{grelaud2009abc}). \cite{joyce2008approximately}
proposed a concept called $\varepsilon-$sufficient to quantify the
effect of statistics. Nonetheless, this property is also difficult
to verify in complicated models. If we are interested only in model
selection, \cite{prangle2014semi} designed a semi-automatic algorithm
to construct summary statistics via logistic regression. And laterly,
\cite{marin2014relevant} gave sufficient conditions on summary
statistics in order to choose the right model based on the Bayes factors.
They advocated that the ideal summary statistics are ancillary in both
model candidates. One of our contribution comes from the mathematical
analysis of the consequence of conditioning the parameters of interest
on consistent statistics and intrinsically inconsistent statistics,
and appraises the efficiency of the posterior approximation based
on the former. Generally speaking, using consistent statistics results
in right concentration of the approximate posterior, while less efficient
statistics lead to less efficiency of approximation. One byproduct
is our theorem vindicates the usage of the posterior mean as summary
statistics as in \cite{fearnhead2012constructing}. 

In addition to the pure theoretical contribution, we also extend the{{}
}\textcolor{black}{two-step}{{} }algorithm in \cite{fearnhead2012constructing}
in a more flexible and nonparametric way for automatic construction of
summary statistics. We borrow the idea from another thriving topic,
namely sufficient dimension reduction (SDR). The motivation of SDR
which generalizes the concept of sufficient statistics is to estimate{{}
}\textcolor{black}{a} transformation $\varphi$, either{{}
}linear or nonlinear, such that 
\begin{equation}
Y\independent X\mid\varphi\left(X\right).\label{eq:sdr}
\end{equation}
 The first SDR method titled sliced inverse regression dates back
to \cite{li1991sliced}, followed by principle Hessian direction
in \cite{li1992principal} and also by \cite{1991} and \cite{cook1998principal}.
As we step in the era of big data, this idea leads to a sea of papers
on both linear and nonlinear predictors and response. Among the more
recent work, we refer to \cite{cook2002dimension}, \cite{xia2002adaptive},
\cite{li2005contour}, \cite{li2009dimension}, \cite{wu2008kernel},
\cite{yeh2009nonlinear}, \cite{su2011partial} and \cite{su2012inner}.
The association between SDR and ABC relies on the shared mathematical
formulation. If we think $\theta$ as the response and $X$ as the
predictor, then an ideal summary statistics $S\left(X\right)$ will
give 
\[
\theta\independent X\mid S\left(X\right).
\]
This simple observation offers raison d'etre to use existing SDR methods
in constructing summary statistics. The employment of dimension reduction
methods in our algorithm is different from that in \cite{blum2013comparative}.
In \cite{blum2013comparative}, dimension reduction methods, such
as best subset selection, projection techniques and regularization
approaches, are applied to reduce the dimension of existing summary
statistics, but here, we try to reduce the size of the original data.
Particularly in our paper, we incorporate the principal  support vector
machine for nonlinear dimension reduction given in \cite{li2011principal}
into ABC, which uses the{{} }\textcolor{black}{principal
} component of support vectors in reproducing kernel Hilbert space
(RKHS) as a nonparametric estimator of $\varphi$. 

The outline of remaining sections is as follows. Section \ref{sec:asymp-partial-post}
contains asymptotic results on the partial posterior. We gradually
relax the restriction on summary statistics and investigate the relationship
between the partial posterior and the full posterior. As a side result,
we give a lemma building a bridge between the recent prior free inferential
model (\cite{martin2013inferential}, \cite{martin2015conditional})
and traditional Bayesian inference. Section \ref{sec:abc-sdr} elicits
a new ABC algorithm which automatically produces summary statistics
through nonlinear SDR. A simulation result is provided in this section{{}
}\textcolor{black}{as well}. Section \ref{sec:Discussion} briefly
discusses the results and points out some possible future generalizations.

\section{\label{sec:asymp-partial-post}Asymptotic Properties of Partial Posterior}

Suppose $X_{1},\ldots,X_{n}\mid\theta$ are i.i.d. with common PDF
$f\left(x\mid\theta\right)$, and there exists a true but unknown
value $\theta_{0}$. Without loss of generality, we assume $\theta\in\mathbb{R}$,
and all probability density functions are with respect to the Lebesgue
measure. For illustration purpose, we define the following terminology.
\begin{definition}[Partial Posterior]
Let $S=S\left(X_{1},\ldots,X_{n}\right)$ be{{} }\textcolor{black}{statistics
}{{} }of the data. Given a prior $\pi\left(\theta\right)$,
we call the distribution 
\[
\pi\left(\theta\mid S\right)\propto\pi\left(\theta\right)g\left(S\mid\theta\right)
\]
the partial posterior, where $g\left(S\mid\theta\right)$ is the probability
density function of statistic $S\left(X_{1},\ldots,X_{n}\right)$
derived from the data density, and correspondingly, 
\[
\pi\left(\theta\mid X_{1},\ldots,X_{n}\right)\propto\pi\left(\theta\right)f\left(X_{1},\ldots,X_{n}\mid\theta\right)
\]
 is called the full posterior.
\end{definition}
From equation (\ref{eq:two-step-approx-abc}), the partial posterior significantly
reduces the complexity of the full posterior by replacing the dependence
on full data by lower dimensional statistics $S$. If the partial
posterior deviates from \textcolor{black}{the }{{} }full
posterior too much, then no matter how delicately we sample from $\pi\left(\theta\mid S_{\mathrm{sim}}\in O\left(S_{\mathrm{obs}},\varepsilon\right)\right)$
,  and how small $\varepsilon$ we choose, the resultant samples would
not behave like ones drawn from the original full posterior. This{{}
}\textcolor{black}{makes } the subsequent Bayesian analysis fragile
and unreliable. Therefore, theoretical connection between some easily
verifiable properties and asymptotic behaviour of the partial posterior
is of relevance. In particular, we want to study consistency and asymptotic
normality of our Bayesian procedures. The following theorems try to
demonstrate the connection between the asymptotic behaviour of summary
statistics and that of partial posterior. We start from the most popular
statistics, the maximum likelihood estimators (MLE) of $\theta$. 
\begin{theorem}
\label{thm:bernstein-von-mises-mle}Let $\hat{\theta}$, the MLE of
$\theta$, be a strongly consistent estimator, and let $\hat{I}$
be the observed Fisher information evaluated at $\hat{\theta}$, and
the full posterior satisfies the Bernstein--von Mises theorem. Then
for any $\varepsilon>0$, and any $t$, the partial posterior after
conditioned on $\hat{\theta}$ satisfies 
\[
\lim_{n\rightarrow\infty}\prob\left\{ \left(n\hat{I}\right)^{1/2}\left(\theta-\hat{\theta}\right)\le t\mid\hat{\theta}\in O\left(\theta_{0},\varepsilon\right)\right\} =\Phi\left(t\right),\ascv.
\]
\end{theorem}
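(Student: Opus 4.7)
The plan is to reduce the claim to the assumed Bernstein--von Mises property of the full posterior by showing that the partial posterior based on $\hat\theta$ is asymptotically indistinguishable from the full one. The conditioning event is handled immediately: strong consistency of $\hat\theta$ gives $\hat\theta \in O(\theta_0,\varepsilon)$ for every sufficiently large $n$ along almost every sample path, so the conditional probability in the statement eventually coincides with the unconditional partial posterior probability of the half--line $\{\theta \le \hat\theta + t/(n\hat I)^{1/2}\}$.

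By Bayes' rule one has $\pi(\theta\mid X_1,\ldots,X_n)/\pi(\theta\mid\hat\theta) \propto f(X_1,\ldots,X_n\mid\theta)/g(\hat\theta\mid\theta) = f(X_1,\ldots,X_n\mid\hat\theta,\theta)$, the conditional density of the data given the MLE. Under the regularity underlying the full Bernstein--von Mises theorem the MLE is asymptotically sufficient, so this conditional density, restricted to an $n^{-1/2}$--neighborhood of $\hat\theta$ in which both posteriors place essentially all of their mass, is free of $\theta$ up to a $1+o(1)$ factor. Hence $\pi(\theta\mid\hat\theta) = \pi(\theta\mid X_1,\ldots,X_n)(1+o(1))$ in total variation, and the full posterior Bernstein--von Mises property, which delivers $\mathrm{pr}\{(n\hat I)^{1/2}(\theta-\hat\theta)\le t\mid X_1,\ldots,X_n\}\to\Phi(t)$ almost surely, transfers directly to the partial posterior.

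A more self--contained alternative is to expand the sampling density via local asymptotic normality, $g(\hat\theta\mid\theta) = \sqrt{nI(\theta)/(2\pi)}\exp\{-\tfrac{n}{2}I(\theta)(\hat\theta-\theta)^2\}(1+o(1))$, substitute into $\pi(\theta\mid\hat\theta)\propto\pi(\theta)\,g(\hat\theta\mid\theta)$, rescale by $u=(n\hat I)^{1/2}(\theta-\hat\theta)$, and invoke continuity of $I$ and $\pi$ at $\theta_0$ together with strong consistency of $\hat\theta$ to obtain pointwise convergence of the localized density to $\phi(u)$; Scheff\'e's lemma then upgrades this to convergence of the distribution function.

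The main technical obstacle, under either route, is controlling the partial posterior outside the $n^{-1/2}$--neighborhood of $\hat\theta$ where the Gaussian approximation of $g(\hat\theta\mid\theta)$ is no longer valid. In the first route one must verify that the asymptotic--sufficiency statement holds uniformly on the relevant shrinking neighborhood; in the second, an exponential tail bound on $g(\hat\theta\mid\theta)$ is required to rule out escape of mass to the tails. Both are standard under the regularity assumptions already subsumed in the hypothesized full posterior Bernstein--von Mises theorem, and the consistency of $\hat\theta$ together with the continuity of $I(\theta)$ at $\theta_0$ is used throughout to replace $I(\theta)$ by $\hat I$ in the limiting expressions.
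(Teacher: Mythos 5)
Your proposal takes a genuinely different route from the paper, and it has two concrete gaps. First, the handling of the conditioning event in your opening paragraph does not work as stated. The probability in the theorem is computed under the \emph{joint} law of $\left(\theta,X_{1},\ldots,X_{n}\right)$ with $\theta\sim\pi$; under that law $\hat{\theta}$ tracks the prior draw $\theta$, not $\theta_{0}$, so $\prob\left\{ \hat{\theta}\in O\left(\theta_{0},\varepsilon\right)\right\} $ tends to the prior mass of the neighbourhood rather than to one, and the conditioning cannot be discarded as eventually vacuous. The paper instead writes the conditional probability as a ratio of iterated integrals, changes measure in the inner data-integral to $\prob^{\infty}\left(\theta_{0}\right)$ --- under which the indicator $I_{\left\{ \hat{\theta}\in O\left(\theta_{0},\varepsilon\right)\right\} }$ \emph{is} eventually one almost surely by strong consistency --- and then swaps the order of integration so that the full-posterior CDF $\prob\left\{ \left(n\hat{I}\right)^{1/2}\left(\theta-\hat{\theta}\right)\le t\mid X_{1},\ldots,X_{n}\right\} $ appears inside an expectation over the data; the assumed Bernstein--von Mises property then gives the limit. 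Your shortcut omits the change of measure that is precisely what licenses dropping the indicator.

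Second, your main mechanism --- asymptotic sufficiency of the MLE, or equivalently the LAN expansion of $g\left(\hat{\theta}\mid\theta\right)$ in your alternative route --- is not among the hypotheses. The theorem assumes only strong consistency of $\hat{\theta}$ and Bernstein--von Mises for the full posterior; the claim that $f\left(X_{1},\ldots,X_{n}\mid\hat{\theta},\theta\right)$ is asymptotically free of $\theta$, uniformly on an $n^{-1/2}$-neighbourhood, is a strictly stronger regularity condition that neither hypothesis delivers, and establishing total-variation closeness of $\pi\left(\theta\mid\hat{\theta}\right)$ to $\pi\left(\theta\mid X_{1},\ldots,X_{n}\right)$ is a stronger statement than the theorem itself. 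Relatedly, your argument is really about conditioning on the exact value of $\hat{\theta}$, which the paper's own remark explicitly distinguishes from the event-conditioning in the statement (recovering the former requires a further $\varepsilon\rightarrow0$ limit); to pass from a pointwise statement back to the event version you would additionally need uniformity in the conditioning value and an averaging step, which you flag in your final paragraph but do not supply. The paper's proof avoids all of this by never comparing the two posteriors as measures.
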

\begin{proof}
See  Appendix A. \end{proof}
\begin{remark}
There is a slight difference between 
\[
\lim_{n\rightarrow\infty}\prob\left\{ \left(n\hat{I}\right)^{1/2}\left(\theta-\hat{\theta}\right)\le t\mid\hat{\theta}\in O\left(\theta_{0},\varepsilon\right)\right\} =\Phi\left(t\right),\ascv.\left(P_{\theta_{0}}\right)
\]
and 
\[
\lim_{n\rightarrow\infty}\prob\left\{ \left(n\hat{I}\right)^{1/2}\left(\theta-\hat{\theta}\right)\le t\mid\hat{\theta}\right\} =\Phi\left(t\right),\ascv.
\]
By definition, 
\begin{equation}
\prob\left\{ \left(n\hat{I}\right)^{1/2}\left(\theta-\hat{\theta}\right)\le t\mid\hat{\theta}\right\} =\lim_{\varepsilon\rightarrow0}\frac{\prob\left\{ \left(n\hat{I}\right)^{1/2}\left(\theta-\hat{\theta}\right)\le t,\hat{\theta}\in O\left(s,\varepsilon\right)\right\} }{\prob\left\{ \hat{\theta}\in O\left(s,\varepsilon\right)\right\} }.\label{eq:def-loose-partial-post}
\end{equation}
The result of Theorem \ref{thm:bernstein-von-mises-mle} can only
be used to prove 
\[
\lim_{\varepsilon\rightarrow0}\lim_{n\rightarrow\infty}\prob\left\{ \left(n\hat{I}\right)^{1/2}\left(\theta-\hat{\theta}\right)\le t\mid\hat{\theta}\in O\left(\theta_{0},\varepsilon\right)\right\} =\Phi\left(t\right),\ascv,
\]
switching order of limits in equation (\ref{eq:def-loose-partial-post}). 
\begin{remark}
The definition of $\prob\left\{ \theta\mid\hat{\theta}\in O\left(\theta_{0},\varepsilon\right)\right\} $
is different from the approximation $\prob\left\{ \theta\mid\hat{\theta}\in O\left(\hat{\theta}_{\mathrm{obs}},\varepsilon\right)\right\} $.
In first case, $\hat{\theta}$ is evaluated at $X_{1},\ldots,X_{n}\sim f\left(x\mid\theta_{0}\right)$,
the observed data, while the latter evaluates $\hat{\theta}$ at $Z_{1},\ldots,Z_{m}\sim f\left(z\mid\theta\right)$,
the simulated data. 
\end{remark}
\end{remark}
By{{} }\textcolor{black}{assumptions }, the asymptotic
distribution of the full posterior is still normal, and we have 
\begin{eqnarray*}
 &  & \sup_{t\in\mathbb{R}}\Bigg|\prob\left\{ \left(n\hat{I}\right)^{1/2}\left(\theta-\hat{\theta}\right)\le t\mid\hat{\theta}\in O\left(\theta_{0},\varepsilon\right)\right\} \\
 &  & -\prob\left\{ \left(n\hat{I}\right)^{1/2}\left(\theta-\hat{\theta}\right)\le t\mid X_{1},\ldots,X_{n}\right\} \Bigg|\\
 & \le & \sup_{t\in\mathbb{R}}\left|\prob\left\{ \left(n\hat{I}\right)^{1/2}\left(\theta-\hat{\theta}\right)\le t\mid\hat{\theta}\in O\left(\theta_{0},\varepsilon\right)\right\} -\Phi\left(t\right)\right|\\
 &  & +\sup_{s\in\mathbb{R}}\left|\prob\left\{ \left(n\hat{I}\right)^{1/2}\left(\theta-\hat{\theta}\right)\le s\mid X_{1},\ldots,X_{n}\right\} -\Phi\left(s\right)\right|\rightarrow0,\:\left(\mathrm{as}\: n\rightarrow\infty\right).
\end{eqnarray*}
Hence, we can informally say that two random variables $\left(n\hat{I}\right)^{1/2}\left(\theta-\hat{\theta}\right)\mid\hat{\theta}$
and $\left(n\hat{I}\right)^{1/2}\left(\theta-\hat{\theta}\right)\mid X_{1},\ldots,X_{n}$
are close in distribution. Note that both random variables asymptotically
center at consistent MLE, and hence will eventually concentrate at
$\theta_{0}$. Meanwhile, the scale factors in both random variables
are $\left(n\hat{I}\right)^{1/2}$, which{{} }\textcolor{black}{ensures
}{{} }the same square root credible \textcolor{black}{intervals}.
In this sense, we feel that the partial posterior conditioned on the MLE
has the same efficiency as the full posterior. Later theorems will
tell us that if the summary statistics are not efficient, the corresponding
partial likelihood will have a different scale factor, and thus will lose
efficiency and result in a larger credible interval. 

A slightly modified proof of Theorem \ref{thm:bernstein-von-mises-mle}
can be used to support the posterior mean as a summary statistic in
\cite{fearnhead2012constructing} and we still have a similar result,
namely 
\begin{eqnarray*}
 &  & \lim_{n\rightarrow\infty}\prob\Bigg[\left(n\hat{I}\right)^{1/2}\left\{ \theta-E\left(\theta\mid X_{1},\ldots,X_{n}\right)\right\} \le t\\
 &  & \mid E\left(\theta\mid X_{1},\ldots,X_{n}\right)\in O\left(\theta_{0},\varepsilon\right)\Bigg]=\Phi\left(t\right),\ascv.
\end{eqnarray*}
The key fact to support the assertion above comes from \cite{ghosh2011moment},
that is, the higher order closeness of the posterior mean and the
MLE, namely 
\begin{equation}
\lim_{n\rightarrow\infty}n^{1/2}\left\{ E\left(\theta\mid X_{1},\ldots,X_{n}\right)-\hat{\theta}\right\} =0,\ascv.\label{eq:high-order-close-post-mean-mle}
\end{equation}
Indeed, any estimator who has the same or higher order of closeness
to MLE will work as an efficient summary statistic.

Theorem \ref{thm:bernstein-von-mises-mle} can be generalized to more
intricate models. The following example shows the same phenomenon
in data generated from a Markov process.
\begin{example}
\label{exa:Immigrate-emigrate-process}Immigration-emigration process
is{{} }\textcolor{black}{a }{{} }crucial
model in survival analysis and can be viewed as a special case of
mass-action stochastic kinetic{{} }\textcolor{black}{network
} (\cite{wilkinson2011stochastic}). The model is defined by a birth
procedure and{{} }\textcolor{black}{a } {{} }death
procedure during an infinitesimal time interval, namely, 
\[
\prob\left\{ X\left(t+\diff t\right)=x_{1}\mid X\left(t\right)=x_{0}\right\} =\begin{cases}
\lambda\diff t+o\left(\diff t\right), & x_{1}=x_{0}+1,\\
\mu x_{0}\diff t+o\left(\diff t\right), & x_{1}=x_{0}-1,\\
1-\lambda\diff t-\mu x_{0}\diff t+o\left(\diff t\right), & x_{1}=x_{0}.
\end{cases}
\]
Assume that we observe full data in the time interval $\left[0,T\right]$.
Let $T_{i},i=1,\ldots,n$ be the event times and let $X_{i}=X\left(T_{i}\right),i=1,\ldots,n$.
Let $X_{0}$ be initial population, $T_{0}=0$, $T_{n+1}=T$. Then
by Gillespie's algorithm, the likelihood is proportional to 
\[
\lambda^{r_{1}}\exp\left(-\lambda T\right)\mu^{r_{2}}\exp\left(-\mu A_{T}\right),
\]
where $r_{1}$ and $r_{2}$ are number of events corresponding to
immigration and emigration, and 
\[
A_{T}=\int_{0}^{T}X\left(t\right)\diff t.
\]
The MLEs are 
\[
\hat{\lambda}=\frac{r_{1}}{T},\hat{\mu}=\frac{r_{2}}{A_{T}},
\]
and they are strongly consistent estimators of $\lambda$ {{}
}\textcolor{black}{and $\mu$} {{} }when $T$ goes to
infinity. By the computation in  Appendix B.1,
we have the partial posterior density function of $T^{1/2}\left(\mu-\hat{\mu}\right)$
conditioned on $\hat{\mu}$, $r_{1}$ and $T$ given by 
\[
\lim_{T\rightarrow\infty}\pi\left\{ T^{1/2}\left(\mu-\hat{\mu}\right)=t\mid\hat{\mu},r_{1},T\right\} =\frac{\hat{\mu}}{\left(2\pi\hat{\lambda}\right)^{1/2}}\exp\left(-\frac{\hat{\lambda}}{\hat{\mu}^{2}}t^{2}\right),\ascv.
\]

\end{example}
The MLE seems to be a perfect surrogate for \textcolor{black}{the
} {{} }full data. However, in many cases, use of MLE is
prohibitive due to heavy computational burden, particularly when the
likelihood function is intractable. This is when the ABC comes on
stage. $M$-estimator is a generalization of the MLE, which is also
consistent and asymptotically normal under mild conditions. Many $M$-estimators
can be easily calculated, especially some moment estimators. To give
an idea of the nature of approximation, we consider the following
examples.
\begin{example}
\label{exa:Gamma-distribution}Gamma distribution can be used to model
hazard functions in survival analysis. The shape parameter of gamma
distribution determines the trend of hazard and hence is a vital parameter
to estimate. Assume $X_{1},\ldots,X_{n}\sim\mathrm{Gamma}\left(\alpha,\beta\right)$,
where we know the scale parameter $\beta$, but not the shape parameter
$\alpha$. The MLE of $\alpha$ is the solution of 
\[
-\log\Gamma\left(\alpha\right)-\alpha\log\beta+\left(\alpha-1\right)\sum_{i=1}^{n}\log X_{i}-\frac{\sum_{i=1}^{n}X_{i}}{\beta}=0,
\]
which involves repeated evaluation of the gamma function in search
of the root. A simple $M$-estimator $\tilde{\alpha}=\overline{X}/\beta$
is derived from its mean equation, 
\[
\sum_{i=1}^{n}\left(X_{i}-\alpha\beta\right)=0.
\]
 Now we consider the partial posterior $\pi\left(\alpha\mid\tilde{\alpha}\right)$,
when the prior is $\pi\left(\alpha\right)\propto\exp\left(-\lambda\alpha\right)$.
By the calculation in  Appendix B.2, we show
that the limit of cumulative probability function of $n^{1/2}\tilde{\alpha}^{-1}\left(\alpha-\tilde{\alpha}\right)$
given $\tilde{\alpha}$ is 
\[
\lim_{n\rightarrow\infty}\prob\left\{ n^{1/2}\tilde{\alpha}^{-1}\left(\alpha-\tilde{\alpha}\right)\le t\mid\tilde{\alpha}\right\} =\Phi\left(t\right),\ascv,
\]
which means that the Bernstein-von Mises theorem holds for the partial
posterior conditioned on the $M$-estimator $\tilde{\alpha}$. The
scale factor of \textcolor{black}{{} the} partial posterior is $n^{1/2}\tilde{\alpha}^{-1}$,
which is smaller than that of {{} }\textcolor{black}{the
} {{} }full posterior, $\left\{ n\psi'\left(\alpha\right)\right\} ^{1/2}$,
where $\psi\left(\alpha\right)$ is digamma function. That results
in a larger credible interval based on the partial posterior. 
\begin{example}
\label{exa:laplace-example}Another example is the Laplace distribution
with pdf 
\[
f_{\mu,\lambda}\left(t\right)=\frac{1}{2\lambda}\exp\left(-\frac{\left|t-\mu\right|}{\lambda}\right).
\]
 Here we want inference for the location parameter $\mu$ holding
$\lambda$ fixed. The MLE is the {{} }\textcolor{black}{the
} sample median and the moment estimator is {{} }\textcolor{black}{the
}sample mean. Here we calculate the partial posterior based on the sample
mean. By the calculation in  Appendix B.3, we
find that the characteristic function of $n^{1/2}\left(\mu-\overline{X}\right)$
converges to $\exp\left(-\lambda^{2}t^{2}\right)$, which is the characteristic
function of normal distribution. 
\end{example}
\end{example}
Example \ref{exa:laplace-example} uses the following lemma which
is of independent interest. 
\begin{lemma}
\label{lem:bayes-inferential-model}Assume $X$ has the same distribution
as $h\left(Y,\theta\right)$, where $h\left(y,\theta\right)$ for
a fixed $y$ is a one-to-one function of $\theta$ and $Y$ is a random
variable independent of $\theta$. Let $\theta=g\left(y,x\right)$
and $y=u\left(x,\theta\right)$ be the solutions of the equation $x=h\left(y,\theta\right)$.
Further assume $\partial u\left(x,\theta\right)/\partial x$ exists
and is not equal to zero. Then the posterior distribution of $\theta$
conditioned on $X$ under the uniform prior has the same distribution
as $g\left(Y,x\right)$, where $x$ is fixed. \end{lemma}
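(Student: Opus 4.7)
The strategy is to express both distributions as explicit densities on the parameter space and to match them via a change of variables. The posterior arises from Bayes' rule applied to the sampling density of $X$ given $\theta$; the law of $g(Y,x)$ arises from pushing the law of $Y$ through the map $y\mapsto g(y,x)$ with $x$ held fixed.

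First, I would derive the sampling density of $X$ given $\theta$. Because $X\stackrel{d}{=}h(Y,\theta)$ with $Y\independent\theta$, and $y\mapsto h(y,\theta)$ is one-to-one with differentiable inverse $y=u(x,\theta)$ (the assumption $\partial u/\partial x\neq 0$ gives a nonvanishing Jacobian), the one-dimensional change-of-variables formula gives $f(x\mid\theta)=p_Y(u(x,\theta))\,|\partial u(x,\theta)/\partial x|$, where $p_Y$ denotes the marginal density of $Y$. Multiplying by the uniform prior and normalizing in $\theta$ yields
\[
\pi(\theta\mid x)\;=\;c(x)\,p_Y\bigl(u(x,\theta)\bigr)\,\bigl|\partial u(x,\theta)/\partial x\bigr|,
\]
with $c(x)$ independent of $\theta$.

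Next I would compute the density of $T=g(Y,x)$ at $t$ with $x$ fixed. Since $g(y,x)$ is by definition the unique $\theta$ solving $x=h(y,\theta)$, the inverse map in the $y$-direction is $y=u(x,t)$, and applying the change-of-variables formula along this map produces $f_T(t)=p_Y(u(x,t))\,|\partial u(x,t)/\partial t|$. Both expressions share the common factor $p_Y(u(x,\theta))$, so the conclusion reduces to verifying that the two Jacobian factors $|\partial u(x,\theta)/\partial x|$ and $|\partial u(x,\theta)/\partial\theta|$ coincide up to a function of $x$ alone, which is absorbed into the normalizer $c(x)$. I would establish this by implicit differentiation of the defining identity $x\equiv h(u(x,\theta),\theta)$, once in $x$ and once in $\theta$: the first produces $\partial u/\partial x=1/(\partial h/\partial y)$ and the second $\partial u/\partial\theta=-(\partial h/\partial\theta)/(\partial h/\partial y)$, both evaluated at $(u(x,\theta),\theta)$, so that in the pivotal setting envisaged by the lemma the two absolute Jacobians match up to the required $x$-only factor.

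The main obstacle is exactly this Jacobian bookkeeping: keeping straight which argument of $u$ is the free variable at each differentiation, and using the identity of $u$ as the inverse of $h$ in its first argument to reconcile the two derivatives. Once the two displays are matched, the equality $\theta\mid X=x\stackrel{d}{=}g(Y,x)$ follows immediately, since both densities integrate to one over the parameter space.
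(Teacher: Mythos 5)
Your overall route is the same as the paper's: compute the sampling density of $X$ given $\theta$ by inverting $h$ in its first argument, multiply by the flat prior, and compare with the pushforward of the law of $Y$ under $y\mapsto g(y,x)$. You have, in fact, located the crux more honestly than the paper does (the paper simply assigns the Jacobian $\left|\partial u(x,\theta)/\partial x\right|$ to the density of $g(Y,x)$ as well, with no computation). But your final step is a genuine gap. By your own implicit differentiation, $\partial u/\partial x = 1/h_y$ and $\partial u/\partial\theta = -h_\theta/h_y$, both evaluated at $\left(u(x,\theta),\theta\right)$, so the ratio of the two absolute Jacobians is $\left|h_\theta\left(u(x,\theta),\theta\right)\right|$. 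Nothing in the hypotheses of the lemma makes this a function of $x$ alone; generically it depends on $\theta$ and cannot be absorbed into the normalizer $c(x)$.

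A concrete failure: take $h(y,\theta)=y\theta$ with $Y>0$ and $\theta>0$, so $u(x,\theta)=x/\theta$ and $g(y,x)=x/y$. The posterior under the flat prior is proportional in $\theta$ to $p_Y(x/\theta)\,\theta^{-1}$, while the density of $g(Y,x)=x/Y$ at $\theta$ is proportional to $p_Y(x/\theta)\,\theta^{-2}$; these are different distributions (e.g., for exponential $p_Y$). The matching you need does hold when $\left|h_\theta\right|$ is free of $\theta$ after substituting $y=u(x,\theta)$ — in particular for location families $h(y,\theta)=y+\theta$, where both Jacobians equal $1$, which is precisely the case the paper uses in Example 3 (Laplace location parameter). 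So to close the argument you must either add a hypothesis of this kind (or restrict to the location setting), or show explicitly for the case at hand that $\left|\partial u/\partial\theta\right|/\left|\partial u/\partial x\right|$ depends only on $x$; as written, the step "the two Jacobians match up to an $x$-only factor" is asserted, not proved, and is false at the stated level of generality.
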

\begin{remark}
Although not quite related to ABC, this lemma gives another interpretation
of inferential model of \cite{martin2013inferential} and \cite{martin2015conditional}.
In their settings, $Y$ is called unobserved ancillary variable, and
$g\left(y,x\right)$ is $\Theta_{x}\left(u\right)$ in their notation.
They claim that their procedure results in a distribution of $\theta$
without referring to a prior. However, by our lemma, this model is
mathematically the same as a posterior given a uniform prior.
\end{remark}
The following theorems are built upon the Theorem 2.1 in \cite{rivoirard2012bernstein},
which guarantees the asymptotic normality of linear functionals of
nonparametric posterior. So we need all the assumptions in that theorem.
Additionally, we need the following assumptions. 
\begin{assumption}
\label{assu:second-order-bounded-differential}There is a neighbourhood
$\theta\in O\left(\theta_{0},\varepsilon\right)$ such that $\int_{\mathbb{R}}g\left(x,\theta_{0}\right)\pi\left(x\mid\theta\right)\diff x$
is a continuous twice differentiable in $\theta$ and the second order
derivative is bounded by some constant $L$. 
\begin{assumption}
\label{assu:m-est-consistent-asymp-norml}$M$-estimator $\tilde{\theta}$
and MLE $\hat{\theta}$ are both strongly consistent and asymptotically
normal.
\begin{assumption}
\label{assu:bernstein-von-mises-full-posterior} Bernstein--von Mises
theorem and posterior consistency hold for the full posterior of $\theta$.
\begin{assumption}
\label{assu:theo-mle}For any $\theta\in\Theta$, $E_{\theta_{0}}\log f\left(X\mid\theta\right)\le E_{\theta_{0}}\log f\left(X\mid\theta_{0}\right)$. 
\end{assumption}
\end{assumption}
\end{assumption}
\end{assumption}
Now we can articulate the theorem. 
\begin{theorem}
\label{thm:partial-post-m-est}Under the Assumptions \ref{assu:second-order-bounded-differential},
\ref{assu:m-est-consistent-asymp-norml}, \ref{assu:bernstein-von-mises-full-posterior},
\ref{assu:theo-mle}, and conditions of Theorem 2.1 in \cite{rivoirard2012bernstein},
for any $\varepsilon$ and $t$, 
\[
\lim_{n\rightarrow\infty}\prob\left\{ \left.\left(n/\tilde{V}\right)^{1/2}\left(\theta-\tilde{\theta}\right)\le t\right|\tilde{\theta}\in O\left(\theta_{0},\varepsilon\right)\right\} =\Phi\left(t\right),\ascv,
\]
where $\tilde{V}=V_{0}/G_{1}\left(\tilde{\theta},\tilde{\theta}\right)^{2}$
is the Godambe information. \end{theorem}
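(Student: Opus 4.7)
The plan is to mirror the proof of Theorem \ref{thm:bernstein-von-mises-mle}, replacing the MLE by the $M$-estimator $\tilde\theta$ and the observed Fisher information by the Godambe information $\tilde V$. Starting from the joint density $\pi(\theta) g_n(\tilde\theta \mid \theta)$ on $\{(\theta, \tilde\theta)\colon \tilde\theta \in O(\theta_0, \varepsilon)\}$, where $g_n(\cdot \mid \theta)$ is the sampling density of $\tilde\theta$ under $P_\theta$ on $n$ observations, I would carry out the change of variables $u = (n/\tilde V)^{1/2}(\theta - \tilde\theta)$ in the inner $\theta$-integral (for each fixed $\tilde\theta$) to rewrite the numerator of the conditional probability as
\[
\int_{O(\theta_0,\varepsilon)} \int_{-\infty}^{t} \pi\!\left(\tilde\theta + u\sqrt{\tilde V/n}\right) g_n\!\left(\tilde\theta \mid \tilde\theta + u\sqrt{\tilde V/n}\right) \sqrt{\tilde V/n}\; du\, d\tilde\theta.
\]

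Next, Assumption \ref{assu:m-est-consistent-asymp-norml} supplies the local normal approximation $g_n(\tilde\theta \mid \theta) \approx (n/V(\theta))^{1/2}\phi(\sqrt{n/V(\theta)}(\tilde\theta - \theta))$, with $V(\theta)$ the Godambe information at $\theta$. Evaluating at $\theta = \tilde\theta + u\sqrt{\tilde V/n}$ and invoking continuity of $V(\cdot)$ together with strong consistency of $\tilde\theta$, the two $V$-factors collapse to $\tilde V$ at leading order, the exponent becomes $-u^2/2$, and the Jacobian $\sqrt{\tilde V/n}$ cancels the normalizing $\sqrt{n/\tilde V}$ inside $g_n$. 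The inner integral therefore tends pointwise in $\tilde\theta$ to $\pi(\tilde\theta)\Phi(t)$, while the same computation with $t = +\infty$ gives the denominator $\prob\{\tilde\theta \in O(\theta_0, \varepsilon)\} \to \int_{O(\theta_0,\varepsilon)} \pi(\tilde\theta)\, d\tilde\theta$, so the ratio is $\Phi(t)$.

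The rest of the work is to upgrade this pointwise cancellation to a statement uniform in $\tilde\theta$ strong enough to survive integration. Assumption \ref{assu:second-order-bounded-differential} lets me Taylor-expand $\pi(\tilde\theta + u\sqrt{\tilde V/n})$ around $u = 0$ with a second-order remainder bounded uniformly on $O(\theta_0, \varepsilon)$, so the $O(n^{-1/2})$ shift of $\theta$ from $\tilde\theta$ is uniformly negligible. Assumption \ref{assu:theo-mle} together with the full-posterior BvM of Assumption \ref{assu:bernstein-von-mises-full-posterior} and Theorem 2.1 of \cite{rivoirard2012bernstein} provide the uniform total-variation control of $g_n(\tilde\theta \mid \cdot)$ against its normal approximation needed to pass through the $\theta$-integration and to rule out any contribution from $\theta$ far from the neighborhood. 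The almost-sure qualifier then follows because $\tilde V = V_0/G_1(\tilde\theta,\tilde\theta)^2$, evaluated at the observed $\tilde\theta$, converges to $V(\theta_0)$ $P_{\theta_0}$-almost surely by strong consistency.

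The main obstacle, I expect, lies at the boundary of $O(\theta_0, \varepsilon)$: when $\tilde\theta$ is only $O(n^{-1/2})$ away from $\partial O(\theta_0, \varepsilon)$, the Gaussian tails of $g_n$ are truncated by the indicator of the conditioning event and the asymptotic Gaussian shape is no longer preserved exactly. One must show that this boundary layer carries vanishing mass when integrated against $\pi(\tilde\theta)\, d\tilde\theta$, and this, rather than the pointwise computation above, is where Assumption \ref{assu:second-order-bounded-differential} and the Rivoirard--Rousseau uniformity do the real work.
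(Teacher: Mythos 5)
Your route is genuinely different from the paper's, and it contains a gap that the stated assumptions cannot close. Everything in your first step rests on a density-level (local limit theorem) approximation $g_n(\tilde\theta\mid\theta)\approx (n/V(\theta))^{1/2}\phi\{(n/V(\theta))^{1/2}(\tilde\theta-\theta)\}$, holding uniformly for $\theta$ in a shrinking $O(n^{-1/2})$ neighbourhood of each $\tilde\theta\in O(\theta_0,\varepsilon)$. Assumption \ref{assu:m-est-consistent-asymp-norml} gives only convergence in distribution of $n^{1/2}(\tilde\theta-\theta)$; it does not guarantee that $\tilde\theta$ even has a Lebesgue density, let alone that this density converges uniformly under the moving parameter $\theta_n=\tilde\theta+u(\tilde V/n)^{1/2}$, which would additionally require contiguity or Edgeworth-type control. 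You also assign the remaining hypotheses roles they cannot play: Assumption \ref{assu:second-order-bounded-differential} is a smoothness condition on $\theta\mapsto\int g(x,\theta_0)\pi(x\mid\theta)\diff x$, not on the prior, so it cannot justify your Taylor expansion of $\pi(\tilde\theta+u\sqrt{\tilde V/n})$; Assumption \ref{assu:theo-mle} is the Kullback--Leibler inequality used to bound likelihood ratios; and Theorem 2.1 of \cite{rivoirard2012bernstein} is a Bernstein--von Mises theorem for linear functionals of the \emph{posterior}, not a total-variation bound on the sampling distribution of $\tilde\theta$. None of these delivers the ``uniform total-variation control of $g_n$'' that your argument needs.

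The paper's proof never touches the sampling density of $\tilde\theta$. It repeats the change-of-measure computation from Theorem \ref{thm:bernstein-von-mises-mle} to reduce the event-conditional probability to a weighted average of full-posterior probabilities $\prob\{n^{1/2}\tilde V^{-1/2}(\theta-\tilde\theta)\le t\mid X_1,\ldots,X_n\}$; it then applies Theorem 2.1 of \cite{rivoirard2012bernstein} to obtain asymptotic normality, under the posterior, of the linear functional $G(\theta,\tilde\theta)=\int g(x,\tilde\theta)f(x\mid\theta)\diff x$ centered at $n^{-1}\sum_i g(X_i,\tilde\theta)=0$ (the estimating equation); Lemma \ref{lem:taylor-expansion-in-dist}, built from Assumptions \ref{assu:second-order-bounded-differential}--\ref{assu:bernstein-von-mises-full-posterior}, linearizes $G(\theta,\tilde\theta)\approx G_1(\tilde\theta,\tilde\theta)(\theta-\tilde\theta)$ so that Slutsky's theorem yields the Godambe scaling $\tilde V=V_0/G_1^2$; and Assumption \ref{assu:theo-mle} bounds the likelihood ratio on the exceptional set where the linearization fails. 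Your concern about a boundary layer at $\partial O(\theta_0,\varepsilon)$ is not where the difficulty lies: strong consistency makes the indicator of the conditioning event identically one for large $n$, exactly as in the proof of Theorem \ref{thm:bernstein-von-mises-mle}. To salvage your direct approach you would need to add an explicit local limit theorem for the $M$-estimator as a hypothesis, which is substantially stronger than anything assumed in the paper.
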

\begin{proof}
See  Appendix C.
\end{proof}
Using similar arguments as Theorem \ref{thm:bernstein-von-mises-mle},
the partial posterior $\left(n/\tilde{V}\right)^{1/2}\left(\theta-\tilde{\theta}\right)\mid\tilde{\theta}$ {{}
}\textcolor{black}{is asymptotically } close in distribution to \textcolor{black}{the
}full posterior $\left(n\hat{I}\right)^{1/2}\left(\theta-\hat{\theta}\right)\mid X_{1},\ldots,X_{n}$.
Since both the $M$-estimator and the MLE are strongly consistent,
the partial posterior still concentrates around the right $\theta_{0}$,
but now the asymptotic $\alpha-$level credible interval based on
the partial posterior, namely 
\[
\left(\tilde{\theta}-Z_{\alpha/2}\left(\frac{\tilde{V}}{n}\right)^{1/2},\tilde{\theta}+Z_{\alpha/2}\left(\frac{\tilde{V}}{n}\right)^{1/2}\right),
\]
will be larger than that based on the full posterior, 
\[
\left(\hat{\theta}-Z_{\alpha/2}\left(\frac{\hat{I}^{-1}}{n}\right)^{1/2},\hat{\theta}+Z_{\alpha/2}\left(\frac{\hat{I}^{-1}}{n}\right)^{1/2}\right),
\]
where $Z_{\alpha/2}$ is the $\left(1-\alpha/2\right)$ quantile of standard
normal distribution. This is because the Godambe information $\tilde{V}^{-1}$
is typically no larger than Fisher information $\hat{I}$. Hence,
we lose efficiency if we condition the posterior on an inefficient
estimator, which coincides with our intuition.

For extreme tortuous models, even finding a consistent estimator can
be quite hard. There are still some simple statistics which may be
consistent to some {{} }\textcolor{black}{functions } of
$\theta$. Unless they are ancillary statistics, they always contain
some information about the {{} }\textcolor{black}{parameters
} of interest. Moreover, in the real case, we use several statistics,
each of which gives independent information of the full posterior.
In the remainder of this section, we will mathematically quantify what
independent information means and show that using more than one statistic
will improve the efficiency. 

Let $S_{i}$, $i=1,\ldots,q$ be statistics of the sample. We make
the following trivial assumptions. 
\begin{assumption}
\label{assu:joint-normal-inconsist-stat}The joint distribution of
$S_{1},\ldots,S_{q}$ converges in distribution to a multivariate
normal distribution $N\left(h\left(\theta_{0}\right),n^{-1/2}\Sigma\left(\theta_{0}\right)\right)$,
and each $S_{i}$ converges to $h_{i}\left(\theta_{0}\right)$ almost
surely. Further, assume $\Sigma\left(\theta_{0}\right)$ {{}
}\textcolor{black}{is } positive definite, and $h\left(\theta_{0}\right)$
is a linear functional of the distribution function, that is 
\[
h\left(\theta_{0}\right)=\int_{\mathbb{R}}g\left(x\right)f\left(x\mid\theta_{0}\right)\diff x,
\]
where $g\left(x\right)\in\mathbb{R}^{q}$. 
\end{assumption}
Assumption \ref{assu:joint-normal-inconsist-stat} characterizes the
 independent information statement. Because if $\Sigma\left(\theta_{0}\right)$
has a lower rank, then some of $S_{i}$ can be expressed as \textcolor{black}{
}linear combinations of {{} }\textcolor{black}{other } $S_{j}$
asymptotically. Then the partial posterior can be reduced to a partial
posterior based solely on the $S_{j}$. The functional form of $h$ is
a natural consequence when we apply some version of strong law of
large numbers to prove convergence of statistics. 

In order to prove the theorem, we need some more technical assumptions.
\begin{assumption}
\label{assu:super-strong-consistent}Let $S=\left(S_{1},\ldots,S_{q}\right)$,
assume 
\[
\lim_{n\rightarrow\infty}n^{1/2}\left\{ \frac{1}{n}\sum_{i=1}^{n}g\left(X_{i}\right)-S\right\} =0,\ascv.
\]
and there exists a strongly consistent estimator $\tilde{\Sigma}$
of $\Sigma\left(\theta_{0}\right)$
\end{assumption}
Only Assumption \ref{assu:super-strong-consistent} seems quite restrictive.
Based on all these assumptions, the theorem describing the partial
posterior conditioned on less informative statistics can be found
as follows:
\begin{theorem}
\label{thm:bernsten-von-mise-inconsist-multv}Under Assumptions \ref{assu:joint-normal-inconsist-stat},
 \ref{assu:super-strong-consistent} and conditions of Theorem 2.1
in \cite{rivoirard2012bernstein}, for any vector $a\in\mathbb{R}^{q}$,
\[
\lim_{n\rightarrow\infty}\sup_{t\in\mathbb{R}}\left|\prob\left[\left.\frac{n^{1/2}a^{T}\left\{ h\left(\theta\right)-S\right\} }{\left(a^{T}\tilde{\Sigma}a\right)^{1/2}}\le t\right|S\right]-\Phi\left(t\right)\right|=0,\ascv.
\]
\end{theorem}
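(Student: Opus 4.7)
The plan is to treat the partial posterior $\pi(\theta\mid S)\propto\pi(\theta)p_n(S\mid\theta)$ by combining two ingredients: the asymptotic normality of $S\mid\theta$ from Assumption \ref{assu:joint-normal-inconsist-stat}, and the Bernstein--von Mises theorem for linear functionals of nonparametric posteriors in Theorem 2.1 of \cite{rivoirard2012bernstein}. The key structural observation is that $a^T h(\theta)=\int a^T g(x)f(x\mid\theta)\,\diff x$ is itself a linear functional of the density $f(\cdot\mid\theta)$ with test function $a^T g$, so the Rivoirard--Rousseau machinery is directly applicable to it.

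First, using Assumption \ref{assu:super-strong-consistent}, I would replace $S$ by $\bar{g}=n^{-1}\sum_{i=1}^{n}g(X_{i})$ in the centering, since $n^{1/2}(S-\bar{g})\to0$ almost surely; a Slutsky-type argument shows that the quantity $n^{1/2}a^{T}\{h(\theta)-S\}$ has the same asymptotic distribution as $n^{1/2}a^{T}\{h(\theta)-\bar{g}\}$. Applying Theorem 2.1 of \cite{rivoirard2012bernstein} to the full posterior $\pi(\theta\mid X_{1},\ldots,X_{n})$ for the linear functional $a^T h(\theta)$ then yields
\[
\sup_{t\in\mathbb{R}}\left|\prob\left\{ n^{1/2}\,\frac{a^{T}\{h(\theta)-\bar{g}\}}{(a^{T}\Sigma(\theta_{0})a)^{1/2}}\le t\;\Big|\;X_{1},\ldots,X_{n}\right\} -\Phi(t)\right|\to0,\ascv.
\]

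The crux is to transfer this conclusion from conditioning on the full data to conditioning on the summary statistic $S$ alone. For this, I would expand the log partial posterior via Bayes' rule using the Gaussian approximation $p_{n}(S\mid\theta)\approx(2\pi/n)^{-q/2}|\Sigma(\theta)|^{-1/2}\exp\{-(n/2)(S-h(\theta))^{T}\Sigma(\theta)^{-1}(S-h(\theta))\}$ supplied by Assumption \ref{assu:joint-normal-inconsist-stat}. Because $\theta$ enters the dominant quadratic term only through the vector $h(\theta)$, the induced posterior on $h(\theta)\in\mathbb{R}^{q}$ is asymptotically Gaussian with mean $S$ and covariance $\Sigma(\theta_{0})/n$; the residual terms $\log\pi(\theta)-(1/2)\log|\Sigma(\theta)|$ are absorbed into an $o(1)$ shift by the regularity conditions of \cite{rivoirard2012bernstein}, which are assumed to hold here. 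Projecting to the direction $a$ then produces the $N(a^{T}S,a^{T}\Sigma(\theta_{0})a/n)$ limit for the marginal of $a^{T}h(\theta)$.

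Finally, Assumption \ref{assu:super-strong-consistent} lets us replace $\Sigma(\theta_{0})$ by $\tilde{\Sigma}$ in the normalization without affecting the limit, and P\'olya's uniform convergence theorem upgrades the pointwise c.d.f.\ convergence (to the continuous $\Phi$) to the supremum bound stated in the theorem. The main obstacle will be the conditioning reduction from the full data to $S$: one must show that the Gaussian approximation to $p_{n}(S\mid\theta)$ is sufficiently uniform over a shrinking neighbourhood of $\theta_{0}$, and that neither the prior factor nor the non-Gaussian remainder in $p_{n}(S\mid\theta)$ disturbs the location-scale structure of the limit. This is essentially a nonparametric Laplace-approximation argument analogous to, but more delicate than, the ones underlying the proofs of Theorems \ref{thm:bernstein-von-mises-mle} and \ref{thm:partial-post-m-est}.
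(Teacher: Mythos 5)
The first half of your argument coincides with the paper's: you apply Theorem 2.1 of \cite{rivoirard2012bernstein} to the full posterior with test function $a^{T}g$, then use Assumption \ref{assu:super-strong-consistent} and Slutsky's theorem to replace the empirical mean $n^{-1}\sum_{i}g\left(X_{i}\right)$ by $S$ in the centering and $\mathrm{var}_{\theta_{0}}\left\{ g\left(X\right)\right\} =\Sigma\left(\theta_{0}\right)$ by its strongly consistent estimator $\tilde{\Sigma}$ in the normalization. Up to that point you have the desired statement conditioned on the full data $X_{1},\ldots,X_{n}$, which is exactly where the paper's proof also stands.

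The gap is in your final step, the reduction from conditioning on the full data to conditioning on $S$. You propose to expand $\log p_{n}\left(S\mid\theta\right)$ around a Gaussian density ``supplied by'' Assumption \ref{assu:joint-normal-inconsist-stat} and run a Laplace-type argument on the partial posterior. But that assumption only gives convergence in distribution of $S$ under $\theta_{0}$; convergence in distribution does not yield an approximation of the density $p_{n}\left(\cdot\mid\theta\right)$, let alone one that is uniform over a neighbourhood of $\theta_{0}$. You would need a local central limit theorem or an Edgeworth expansion for $S$, uniform in $\theta$, which is nowhere assumed, and you yourself flag this as an unresolved obstacle rather than closing it. The paper sidesteps the density of $S$ entirely: as in the proof of Theorem \ref{thm:bernstein-von-mises-mle}, it writes the conditional probability as a ratio of expectations under $\prob^{\infty}\left(\theta_{0}\right)$ weighted by the likelihood ratio $\prod_{i}f\left(X_{i}\mid\theta\right)/f\left(X_{i}\mid\theta_{0}\right)$, uses the almost sure convergence of the statistics to drop the conditioning indicator, interchanges the order of integration, and then inserts the already established full-posterior limit. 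To complete your proof you should replace the Laplace step by that change-of-measure argument, or else add the missing uniform density approximation for $S$ as an explicit (and substantially stronger) assumption.
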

\begin{proof}
See Appendix D.
\end{proof}
Theorem \ref{thm:bernsten-von-mise-inconsist-multv} extends the asymptotic
results about $M$-estimators to more general statistics, particularly
the intrinsically inconsistent statistics defined as follows.
\begin{definition}[Intrinsic Consistency]
Let $S$ be an non-ancillary statistic and converges to $h\left(\theta_{0}\right)$
almost surely. If $h\left(\cdot\right)$ is an one-to-one function
and has an inverse function, then we say $S$ is intrinsically consistent.
Otherwise, we say $S$ is intrinsically inconsistent. 
\end{definition}
If $S$ is a one dimensional intrinsically inconsistent statistic, then
Theorem \ref{thm:bernsten-von-mise-inconsist-multv} asserts the $\left(1-\alpha\right)$
asymptotic credible set based on {{} }\textcolor{black}{the
} partial posterior is 
\[
\left\{ \theta:S-Z_{\alpha/2}\left(\frac{\tilde{\Sigma}}{n}\right)^{1/2}\le h\left(\theta\right)\le S+Z_{\alpha/2}\left(\frac{\tilde{\Sigma}}{n}\right)^{1/2}\right\} .
\]
In an extreme case, when sample size $n$ is large enough, such that
$Z_{\alpha/2}/\sqrt{n}\approx0$, the asymptotic credible interval
based on the {{} }\textcolor{black}{the } full posterior would be
close to the singleton $\left\{ \hat{\theta}\right\} $. However,
the credible set based on {{} }\textcolor{black}{the } partial
posterior would be $\left\{ \theta:h\left(\theta\right)=S\right\} $.
By the definition of intrinsic inconsistency, $h$ is not a one-to-one
function. Then the set $\left\{ \theta:h\left(\theta\right)=S\right\} $
would possibly hold multiple elements, hence larger than that from {{}
}\textcolor{black}{the } full posterior. Again, in this sense, we
perceive loss of efficiency due to conditioning the posterior on arbitrary
statistics. 

Another interesting use of Theorem \ref{thm:bernsten-von-mise-inconsist-multv}
is a more pragmatic asymptotic assessment of effectiveness of including {{}
}\textcolor{black}{many} statistics than that in \cite{joyce2008approximately}.
In their settings, the effectiveness of summary statistics is measured
by the difference between log-likelihoods, thus not operable when
likelihood \textcolor{black}{{} functions are} intractable. On the other
hand, our approach only {{} }\textcolor{black}{requires
}the asymptotic behaviour of statistics, and the corresponding credible
set with $q$ statistics can be developed by the Cramer device as 
\[
\left\{ \theta:n\left(h\left(\theta\right)-S\right)^{T}\tilde{\Sigma}\left(h\left(\theta\right)-S\right)\le\chi_{1-\alpha,q}^{2}\right\} ,
\]
where $\chi_{1-\alpha,q}^{2}$ is $\left(1-\alpha\right)$ quantile
of chi-square distribution with degree of freedom $q$. To select
summary statistics, we can compare the asymptotic credible sets with
and without the current statistic. If the difference is small, then
we can safely throw the current statistic away. 

The following is a simple example to illustrate this phenomenon.
\begin{example}
\textcolor{black}{Let $X_{i}$ are i.i.d. sample from $N\left(\mu,\mu^{2}\right)$
and we calculate the partial posterior $\pi\left(\mu\mid s^{2}\right)$,
where $s^{2}$ is the sample standard deviation.  The prior of $\mu^{2}$
is assumed to be inverse gamma distribution with shape parameter $\alpha$
and scale parameter $\beta$. We know the $s^{2}\sim\mu^{2}\chi_{n-1}^{2}/\left(n-1\right)$,
hence the partial posterior of $\mu^{2}$ conditioning on $s^{2}$
is an inverse gamma distribution with shape parameter $\left\{ \alpha-1+\left(n-1\right)/2\right\} $
and scale parameter $\left\{ \beta+\left(n-1\right)s^{2}/2\right\} $.
This leads to a bimodal partial posterior for $\mu$. Hence, we can
only get the absolute value of $\mu$ without sign information from
this partial posterior.}
\end{example}

\section{\label{sec:abc-sdr}Approximate Bayesian Computation {{}
}\textcolor{black}{via } Nonlinear Sufficient Dimension Reduction}

In principle, almost all the existing dimension reduction methods
are valid in estimating the summary statistics. However, there is
a slight difference between the setting of SDR and ABC. In the theory
of SDR, the independent assumption \ref{eq:sdr} must hold rigorously,
which implies $Y\mid X$ has exactly the same distribution as $Y\mid S\left(X\right)$.
However, by our Theorem \ref{thm:bernstein-von-mises-mle}, \ref{thm:partial-post-m-est}
and \ref{thm:bernsten-von-mise-inconsist-multv}, the two distributions
are only close in large but finite samples.

\subsection{Algorithm: ABC via PSVM}

In our paper, we choose principal support vector machine in \cite{li2011principal}.
Suppose we have a regression problem $\left(Y_{i},X_{i}\right)$,
and search a nonlinear transformation $\varphi:\mathbb{R}^{p}\rightarrow\mathbb{R}^{d}$,
such that $Y\independent X\mid\varphi\left(X\right)$. Then the main
steps in principal support vector machine are given in Algorithm \ref{alg:Principal-Support-Vector}
. 

\begin{algorithm}
\begin{enumerate}
\item (Optional) Marginally standardize data $X_{1},\ldots,X_{n}$. The
purpose of this step is so that the kernel $\kappa$ treats different
components of $X_{i}$ more or less equally. 
\item Choose kernel $\kappa$ and the number of basis {{} }\textcolor{black}{functions
}$k$ (usually around $n/3\sim2n/3$). Compute $K=\left\{ \kappa\left(X_{i},X_{j}\right)\right\} _{n\times n}$.
Let $Q=I_{n}-J_{n}/n$, where $J_{n}$ is the $n\times n$ matrix
whose entries are 1. Compute largest $k$ eigenvalues $\lambda_{1},\ldots,\lambda_{k}$
and corresponding eigenvectors $w_{1},\ldots,w_{k}$ of matrix $QKQ$.
Let $\Psi=\left(w_{1},\ldots,w_{k}\right)$ and $P_{\Psi}=\Psi\left(\Psi^{T}\Psi\right)^{-1}\Psi^{T}$
be the corresponding projection matrix.
\item \label{enu:svm}Partition the response variable space $Y$ into $h$
slices defined by $y_{1},\ldots,y_{h-1}$. For each $y_{s},\: s=1,\ldots,h-1$,
define a new response variable $\tilde{Y}_{si}=I_{\left[Y_{i}\le y_{s}\right]}-I_{\left[Y_{i}>y_{s}\right]}$.
Then solve the modified support vector machine problem as a standard
quadratic programming 
\[
\min_{\alpha}-1^{T}\alpha+\frac{1}{4}\alpha^{T}\mathrm{diag}\left(\tilde{Y}_{s}\right)P_{\Psi}\mathrm{diag}\left(\tilde{Y}_{s}\right)\alpha,
\]
subject to constraints
\[
\begin{cases}
0\le\alpha\le\lambda,\\
\tilde{Y}_{s}^{T}\alpha=0,
\end{cases}
\]
where $\mathrm{diag}\left(\tilde{Y}_{s}\right)$ is a diagonal matrix
using $\tilde{Y}_{s}$ as diagonal, $\lambda$ is a hyper-parameter
in ordinary support vector machine. The coefficients of support vectors
in RKHS {{} }\textcolor{black}{are} 
\[
c_{s}^{*}=\frac{1}{2}\left(\Psi^{T}\Psi\right)^{-1}\Psi^{T}\mathrm{diag}\left(\tilde{Y}_{s}\right)\alpha_{s}.
\]

\item \label{enu:pca}Let $d$ be the target dimension. Compute the eigenvectors
$v_{1},\ldots,v_{d}$ of first largest $d$ eigenvalues of the matrix
$\sum_{s=1}^{h-1}c_{s}^{*}c_{s}^{*T}$. Let $V=\left(v_{1},\ldots,v_{d}\right).$
\item Let $K\left(x,X\right)=\left\{ \kappa\left(x,X_{i}\right)-n^{-1}\sum_{j=1}^{n}\kappa\left(x,X_{j}\right)\right\} $
be a $n$ dimensional vector. Then the estimated transformation $\hat{\varphi}\left(x\right)=V^{T}\left\{ \mathrm{diag}\left(\lambda_{1},\ldots\lambda_{k}\right)\right\} ^{-1}\Psi^{T}K\left(x,X\right).$
\end{enumerate}
\protect\caption{Principal Support Vector Machine\label{alg:Principal-Support-Vector}}

\end{algorithm}
By slicing the response variable space, we discretize variation of
$Y$. The support vector machine in Step \ref{enu:svm} recognizes
the robust separate hyperplanes. We will expect the variation of $Y$
along the directions within hyperplanes to be negligible and that
along the directions perpendicular to the hyperplanes to explain the
most part of covariation between $Y$ and $X$. The principal component
analysis on the support vectors in Step \ref{enu:pca} estimates the {{}
}\textcolor{black}{principal } {{} }perpendicular directions
and hence creates the sufficient directions in RKHS. 

Based on Algorithm \ref{alg:Principal-Support-Vector}, we formulate
our \textcolor{black}{two-step }approximate Bayesian computation algorithm
in Algorithm \ref{alg:ABC-via-PSVM}.

\begin{algorithm}

\begin{enumerate}
\item \label{enu:direct-sampling}Sample $\theta_{i}$ from the prior $\pi\left(\theta\right)$
and sample $X_{i1},\ldots,X_{in}$ from the model $f\left(x\mid\theta_{i}\right)$. 
\item View $\left(\theta_{i},X_{i1},\ldots X_{in}\right)$ as a multivariate
regression problem and reduce the dimension from $n$ to $d$ via
principal support vector machine. Denote the estimated transformation
as $\hat{S}\left(X_{1},\ldots,X_{n}\right)$. 
\item Either use existent samples in Step \ref{enu:direct-sampling} or
repeat it and get new sample. Calculate the estimated summary statistics
$\hat{S}_{i}=\hat{S}\left(X_{i1},\ldots,X_{in}\right)$ on each set
$X_{i1},\ldots,X_{in}$ corresponding to prior samples $\theta_{i}$.
Also calculate $\hat{S}_{\mathrm{obs}}=\hat{S}\left(X_{1},\ldots,X_{n}\right)$
on the observed data set.
\item Based on the metric $\rho\left(\hat{S}_{i},\hat{S}_{\mathrm{obs}}\right)$,
make the decision of accept or reject of $\theta_{i}$. 
\end{enumerate}
\protect\caption{\label{alg:ABC-via-PSVM}ABC via PSVM}

\end{algorithm}
Algorithm \ref{alg:ABC-via-PSVM} directly generalizes the semi-automatic
ABC in \cite{fearnhead2012constructing}. In their algorithm, the
summary statistics are fixed as posterior means and the recommended
estimation method is polynomial regression. Our algorithm relaxes
the restriction on summary statistics and {{} }\textcolor{black}{lets
}the data and nonparametric algorithm together find them adaptively.
One significant difference between our algorithm and the conventional
ABC is in Step \ref{enu:direct-sampling}, where each prior sample
$\theta_{i}$ produces exact $n$ simulated data, because the nonparametric
estimator of statistics should take $n$ arguments so that it can
be evaluated at both observed data and simulated data.

\subsection{Robustness of PSVM}

\textcolor{black}{The discrepancy between the asymptotic behavior
of ABC and settings in SDR requires new properties of PSVM, namely
robustness, which suggests that $\hat{\Gamma}$ from PSVM would be
in the vicinity of the true summary statistics function $\Gamma$
in some sense even if the partial posterior is only close to the full
posterior. Here is the theorem which validates this property.}
\begin{theorem}[Robustness of PSVM]
\textcolor{black}{\label{thm:robust-psvm} Assume 
\[
\lim_{n\rightarrow\infty}\sup_{s}\left|\prob\left\{ \theta\le s\mid\Gamma_{n}^{T}\left(X_{1},\ldots,X_{n}\right)\right\} -\prob\left(\theta\le s\mid X_{1},\ldots,X_{n}\right)\right|=0,
\]
where $\Gamma\in\mathbb{R}^{n\times d}$, with $d$ fixed. Further
assume all the conditions in Theorem 4 and 5 in \cite{li2011principal}. Let $\hat{\Gamma}_{n}$
be the result from PSVM. Then 
\[
\hat{\Gamma}_{n}-\Gamma_{n}\overset{p}{\rightarrow}0.
\]
}\end{theorem}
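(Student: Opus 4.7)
The plan is to exploit the asymptotic sufficiency assumption on $\Gamma_n^T(X_1,\ldots,X_n)$ and combine it with the exact-SDR consistency results in \cite{li2011principal}, specifically their Theorems 4 and 5. PSVM is driven entirely by the conditional slice probabilities $\prob\{\theta \le y_s \mid X_1,\ldots,X_n\}$; the hypothesis of the theorem ensures that these can be replaced, uniformly in the slicing cutoff $s$, by $\prob\{\theta \le y_s \mid \Gamma_n^T X\}$ with error vanishing as $n\to\infty$. Consequently the surrogate response $\theta$ depends on the covariates only through $\Gamma_n^T X$ in an asymptotic sense, which places us almost in the exact SDR setting of (\ref{eq:sdr}).

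First I would formalize the population PSVM functional. For each slice cutoff $y_s$, let $\tilde{Y}_s = I_{[\theta \le y_s]} - I_{[\theta > y_s]}$ and let $L_s(\beta; P)$ denote the population SVM objective at a direction $\beta$ in the RKHS when $(\tilde{Y}_s, X)$ has joint law $P$. As exploited in \cite{li2011principal}, $L_s$ depends on $P$ only through the marginal of $X$ and the conditional $\prob\{\tilde{Y}_s = 1 \mid X\}$. The uniform convergence of posteriors postulated in the theorem yields
\[
\sup_{s}\left|\prob\{\tilde{Y}_s = 1 \mid X_1,\ldots,X_n\} - \prob\{\tilde{Y}_s = 1 \mid \Gamma_n^T X\}\right| \to 0,
\]
so the PSVM population functional under the true joint law of $(\theta, X_1,\ldots,X_n)$ is asymptotically indistinguishable from the one under an idealized law in which exact SDR with direction $\Gamma_n$ holds.

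Next I would apply Theorems 4 and 5 of \cite{li2011principal} to the idealized law: those results guarantee that the empirical PSVM output converges in probability to $\Gamma_n$, modulo the usual identification of a subspace, whenever exact SDR holds. Writing
\[
\hat{\Gamma}_n - \Gamma_n = (\hat{\Gamma}_n - \Gamma_n^{\mathrm{pop}}) + (\Gamma_n^{\mathrm{pop}} - \Gamma_n),
\]
where $\Gamma_n^{\mathrm{pop}}$ is the PSVM population minimizer under the actual joint law, splits the problem into an empirical-process term handled exactly by \cite{li2011principal}, and a distributional-stability term that follows from the display above.

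The main obstacle will be making the distributional-stability claim rigorous. Specifically, I must produce a quantitative continuous-mapping argument for the argmin of the slicewise quadratic programs in Step \ref{enu:svm} of Algorithm \ref{alg:Principal-Support-Vector} as a function of the conditional slice probabilities, uniformly in $s$. The strict convexity and coercivity of the SVM objective, inherited from the positive definiteness of the Gram matrix assumed in the conditions of \cite{li2011principal}, should yield slicewise convergence of each $c_s^*$; continuity of the eigenvector extraction in Step \ref{enu:pca}, valid under the non-vanishing spectral gap already postulated, then transfers this convergence to $\hat{\Gamma}_n - \Gamma_n \overset{p}{\to} 0$.
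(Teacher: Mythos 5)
Your decomposition is the same one the paper uses: write the error as (empirical PSVM output minus population solution under the actual conditional law) plus (population solution under the actual law minus population solution under the idealized exact-SDR law), handle the first term by the consistency results of \cite{li2011principal} and the second by stability of the population solution in the conditional slice probabilities, and finally push the convergence through the eigen-decomposition step. The paper does exactly this at the level of the slicewise normal vectors: it sets $\psi_1=\psi\left(\prob\left(\tilde{\theta}=1\mid X_1,\ldots,X_n\right)\right)$ and $\psi_2=\psi\left(\prob\left(\tilde{\theta}=1\mid\Gamma_n^T X\right)\right)$, uses Theorem 2 of \cite{li2011principal} to place $\mathrm{span}\left(\psi_2\right)$ inside $\mathrm{span}\left(\Gamma_n\right)$, Theorem 6 for $\hat{\psi}_1-\psi_1\overset{p}{\rightarrow}0$, and Theorem 1 of \cite{bura2008distribution} for the eigenvector step --- all of which you anticipate. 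Where you diverge is on the step you yourself flag as the main obstacle: you propose to get continuity of the slicewise argmin from strict convexity and coercivity of the SVM objective, whereas the paper instead works with the population first-order condition $D_{(\psi,t)}E\left\{m\left(\psi,t,X,\tilde{\theta}\right)\right\}=0$, observes that the only term involving the conditional probability $p(x)$ enters as a bounded \emph{linear} functional of $p$ in the sup-norm Banach space, and invokes an implicit-function theorem (Theorem 3.1.2 of \cite{lebedev2003functional}, with the required differentiability around the solution supplied by Theorem 5 of \cite{li2011principal}) to conclude $\psi$ is continuous in $p$. Be aware that your convexity route is not a free lunch: the hinge term is only piecewise linear, so the population objective is strictly convex in $\psi$ (via $\Sigma\succ0$) but not in the intercept $t$, and a bare argmin-continuity argument would still need essentially the smoothness of the population criterion near the optimum that the paper borrows from Theorem 5 of \cite{li2011principal}; the implicit-function route makes the dependence on $p$ transparent because the perturbation enters linearly. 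Either way the skeleton is sound, but to complete your version you must actually prove the stability estimate you defer, and the paper's linear-functional observation is the cleanest way to do it.
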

\begin{proof}
\textcolor{black}{See Appendix E. }
\end{proof}

\subsection{\textcolor{black}{Simulation Example }}

\textcolor{black}{First, we will show a simple simulation example
to illustrate the robustness of SVM, which serves as the foundation
of the robustness of PSVM.}
\begin{example}
\textcolor{black}{Let $\theta\mid X_{1},X_{2}\sim N\left(2X_{1}+X_{2}+0.001\left(X_{1}^{2}+X_{2}^{2}\right),1\right)$.
Hence the conditional distribution of $\theta$ on $X_{1},X_{2}$
is close to $N\left(2X_{1}+X_{2},1\right)$. Then the normal vectors
of sliced SVM should be in vicinity of $\left(2,1\right)$. We choose
the slicing point $s=1.5$. The results are summarized in Fig \ref{fig:Normal-Vectors-sliced-svm}.
The dots are normal vectors, the solid line is the principal component
direction of normal vectors, and the reference dashed line is $\psi_{1}=2\psi_{2}$.
We can see the principal component direction is quite close to the
reference line.}

\begin{figure}
{\includegraphics[scale=0.5]{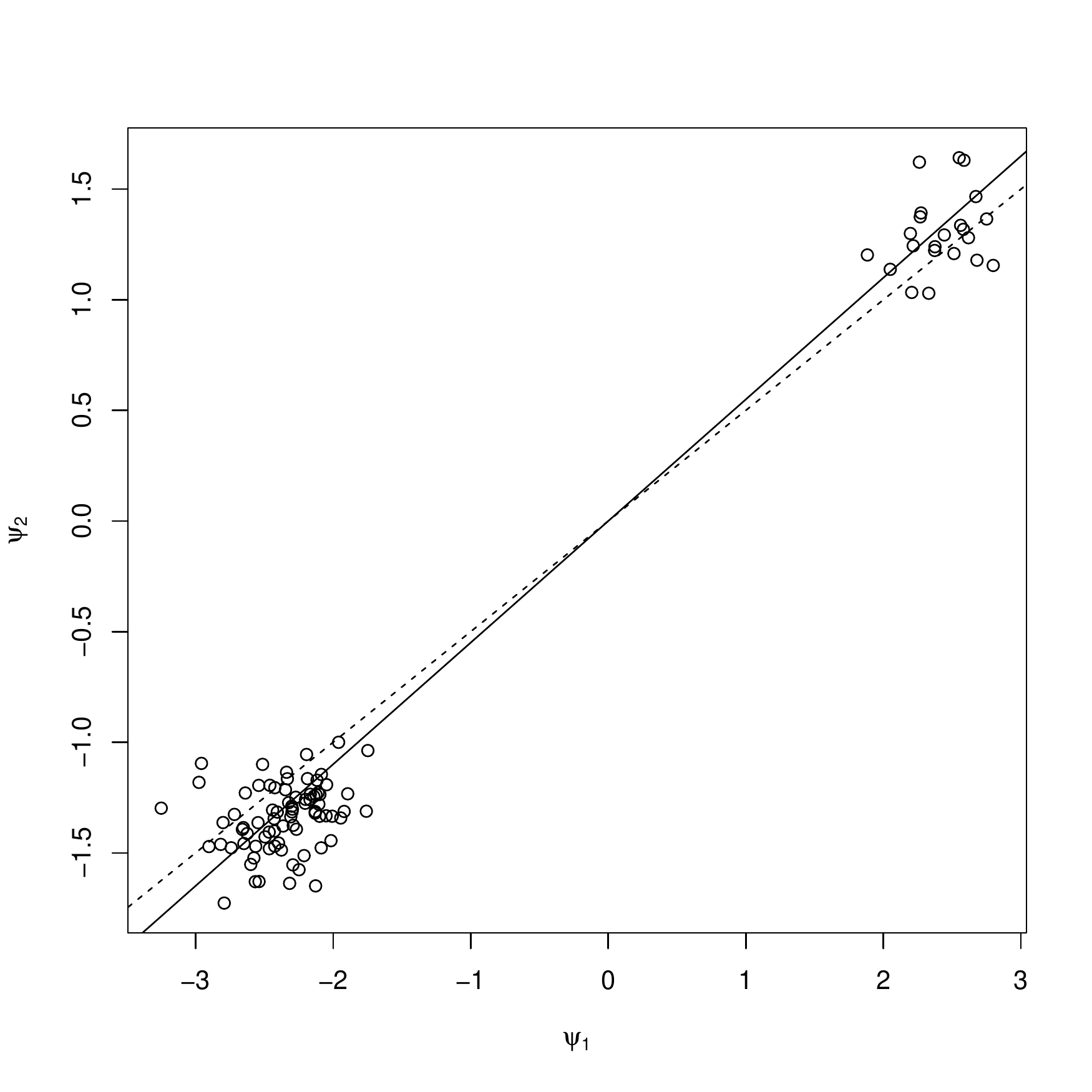}\protect\caption{\textcolor{black}{Normal Vectors of Separate Hyperplanes in Sliced
SVM}\label{fig:Normal-Vectors-sliced-svm}}
}

\end{figure}

\end{example}
Next, we will show a simple simulation example to illustrate our algorithm.
\begin{example}
Autoregressive model with lag one, AR(1). 
\[
Y_{i}=\beta Y_{i-1}+\varepsilon.
\]
Set $Y_{1}=1$ and number of observation is 100. Assume $\varepsilon\sim N\left(0,0.5^{2}\right)$,
true regression coefficient 0.6. We put uniform prior in $\left(-1,1\right)$
on $\beta$. Then the true posterior distribution is $N\left(\sum_{i=1}^{99}Y_{i}Y_{i+1}/\left(1+\sum_{i=1}^{99}Y_{i}^{2}\right),\left(1+\sum_{i=1}^{99}Y_{i}^{2}\right)^{-1}\right)$.
Now we apply our algorithm with the target dimension $d=1$ and slicing
pieces $h=4$ with the slicing parameters $y_{k}$ as quartiles. The
sample size from the prior is 1000, with $k=100/2=500$. Kernel $\kappa$
is chosen as Gaussian kernel $\kappa\left(x_{i},x_{j}\right)=\exp\left(-10^{-5}\times\|x_{i}-x_{j}\|^{2}\right)$.
Then the posterior density estimated from ABC samples are plotted
in Fig. \ref{fig:ABC-vs-True}. %
\begin{comment}
try increasing number of slice and k
\end{comment}

\begin{figure}

\includegraphics[scale=0.5]{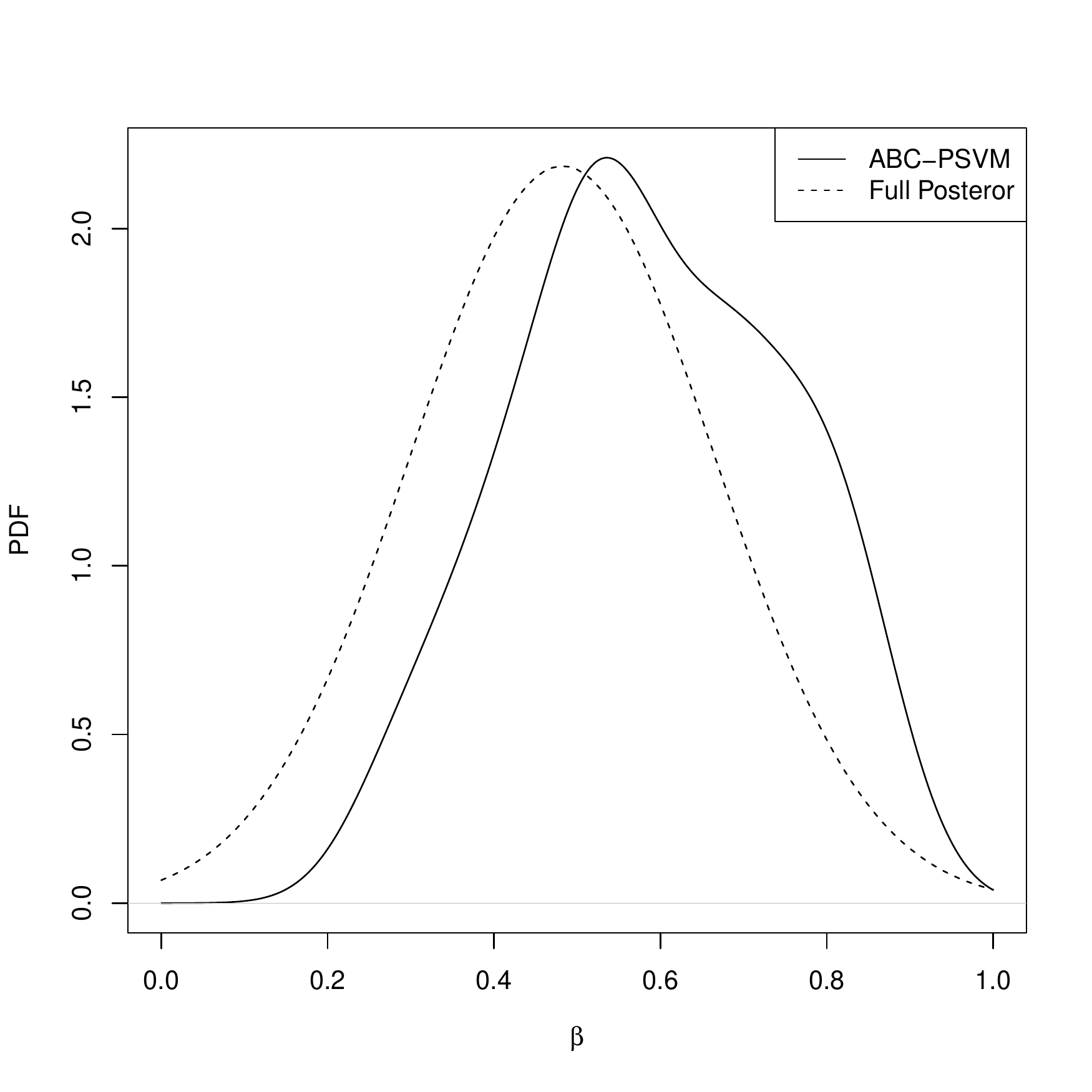}\protect\caption{ABC Density vs True Posterior Density\label{fig:ABC-vs-True}}

\end{figure}
The slight skewness in Fig. \ref{fig:ABC-vs-True} possibly due the
the small sample size of the observed data. Another interesting result
of this simulation is shown in Fig. \ref{fig:ss-vs-mle}. 
\begin{figure}

\includegraphics[scale=0.5]{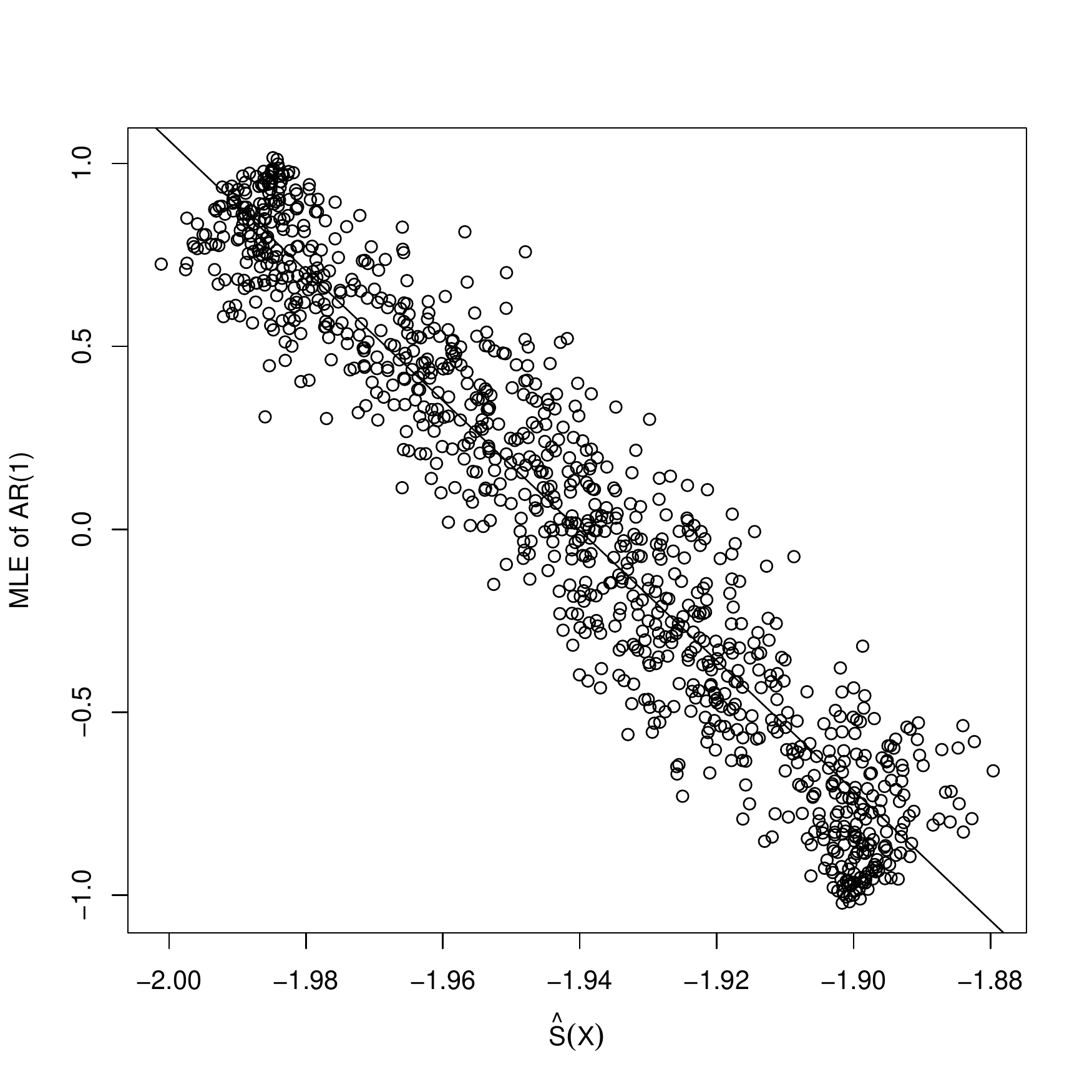}\protect\caption{Estimated Summary Statistic vs MLE\label{fig:ss-vs-mle}}

\end{figure}
There is a strong linear relationship between the estimated summary
statistic and MLE 
\[
\hat{\beta}=\frac{\sum_{i=1}^{99}Y_{i}Y_{i+1}}{\sum_{i=1}^{99}Y_{i}^{2}},
\]
which is one of the most efficient summary{{} }\textcolor{black}{statistics
} based on Theorems \ref{thm:bernstein-von-mises-mle} and \ref{thm:partial-post-m-est}.
Hence, our algorithm will automatically approach the most efficient
summary statistics in a nonparametric way. 
\end{example}

\section{\label{sec:Discussion}Discussion}

In this paper, we explore ABC both from  theoretical and computational
points of view. The theory part architects the foundation of ABC by
linking asymptotic properties of statistics to that of the partial
posterior. The application part innovates the algorithm by virtue
of bridging selection of summary statistics and SDR. However, although
the theory in \cite{li1992principal} is very powerful and may be
used as a theoretical guide for our algorithm, it heavily depends on
the relation (\ref{eq:sdr}) holding rigorously. We do not know whether
the result from the principal support vector machine would be defunct
if (\ref{eq:sdr}) is only valid in $\varepsilon$-sufficient way.
Moreover, bringing in dimension reduction regression settings perhaps
moderates the usage when there are multiple parameters of interest,
and may need advance techniques such as envelope models of \cite{su2011partial,su2012inner}.

\appendix

\section{Proof of Theorem 1}\label{sec:proof-thm-1}
\begin{proof}
\begin{eqnarray*}
 &  & \prob\left\{ \left(n\hat{I}\right)^{1/2}\left(\theta-\hat{\theta}\right)\le t\mid\hat{\theta}\in O\left(\theta_{0},\varepsilon\right)\right\} \\
 & = & \frac{\prob\left\{ \left(n\hat{I}\right)^{1/2}\left(\theta-\hat{\theta}\right)\le t,\hat{\theta}\in O\left(\theta_{0},\varepsilon\right)\right\} }{\prob\left\{ \hat{\theta}\in O\left(\theta_{0},\varepsilon\right)\right\} }\\
 & = & \frac{\int I_{\left\{ \hat{\theta}\in O\left(\theta_{0},\varepsilon\right)\right\} }I_{\left\{ \left(n\hat{I}\right)^{1/2}\left(\theta-\hat{\theta}\right)\le t\right\} }\prod_{i=1}^{n}f\left(X_{i}\mid\theta\right)\pi\left(\theta\right)\diff X_{i}\diff\theta}{\int I_{\left\{ \hat{\theta}\in O\left(\theta_{0},\varepsilon\right)\right\} }\prod_{i=1}^{n}f\left(X_{i}\mid\theta\right)\pi\left(\theta\right)\diff X_{i}\diff\theta}.
\end{eqnarray*}
Let $\prob^{\infty}\left(\theta\right)$ be the probability measure
on infinite independent and identically distributed sequence $X_{1},\ldots,X_{n},\ldots,$.
Then 
\begin{eqnarray}
 &  & \prob\left\{ \left(n\hat{I}\right)^{1/2}\left(\theta-\hat{\theta}\right)\le t\mid\hat{\theta}\in O\left(\theta_{0},\varepsilon\right)\right\} \nonumber \\
 & = & E_{\pi\left(\theta\right)}\Bigg(E_{\prob^{\infty}\left(\theta_{0}\right)}\Bigg[I_{\left\{ \hat{\theta}\in O\left(\theta_{0},\varepsilon\right)\right\} }I_{\left\{ \left(n\hat{I}\right)^{1/2}\left(\theta-\hat{\theta}\right)\le t\right\} }\nonumber \\
 &  & \exp\left\{ \sum_{i=1}^{n}\log\pi\left(X_{i}\mid\theta\right)-\log\pi\left(X_{i}\mid\theta_{0}\right)\right\} \Bigg]\Bigg)\nonumber \\
 &  & \Bigg/E_{\pi\left(\theta\right)}\Bigg(E_{\prob^{\infty}\left(\theta_{0}\right)}\Bigg[I_{\left\{ \hat{\theta}\in O\left(\theta_{0},\varepsilon\right)\right\} }\nonumber \\
 &  & \exp\left\{ \sum_{i=1}^{n}\log\pi\left(X_{i}\mid\theta\right)-\log\pi\left(X_{i}\mid\theta_{0}\right)\right\} \Bigg]\Bigg)\label{eq:post-change-measure}
\end{eqnarray}
By strong consistency of the maximum likelihood estimator, for any
$\varepsilon>0$, there is an $N>0$, such that for any $n>N$, $\prob\left\{ \hat{\theta}\in O\left(\theta_{0},\varepsilon\right)\mid\theta_{0}\right\} =1$.
Thus, we can drop the indicator $I_{\left\{ \hat{\theta}\in O\left(\theta_{0},\varepsilon\right)\right\} }$
without changing the value in (\ref{eq:post-change-measure}). We
can change the order of integration. So the numerator of (\ref{eq:post-change-measure})
is 
\begin{eqnarray*}
 &  & E_{\prob^{\infty}\left(\theta_{0}\right)}\left(\left\{ \prod_{i=1}^{n}f\left(X_{i}\mid\theta_{0}\right)\right\} ^{-1}E_{\pi\left(\theta\right)}\left[I_{\left\{ \left(n\hat{I}\right)^{1/2}\left(\theta-\hat{\theta}\right)\le t\right\} }\prod_{i=1}^{n}f\left(X_{i}\mid\theta\right)\right]\right)\\
 & = & E_{\prob^{\infty}\left(\theta_{0}\right)}\left[\left\{ \prod_{i=1}^{n}f\left(X_{i}\mid\theta_{0}\right)\right\} ^{-1}\int I_{\left\{ \left(n\hat{I}\right)^{1/2}\left(\theta-\hat{\theta}\right)\le t\right\} }\prod_{i=1}^{n}f\left(X_{i}\mid\theta\right)\pi\left(\theta\right)\diff\theta\right]\\
 & = & E_{\prob^{\infty}\left(\theta_{0}\right)}\Bigg[\left\{ \prod_{i=1}^{n}f\left(X_{i}\mid\theta_{0}\right)\right\} ^{-1}\prob\left\{ \left(n\hat{I}\right)^{1/2}\left(\theta-\hat{\theta}\right)\le t\mid X_{1},\ldots,X_{n}\right\} \\
 &  & E_{\pi\left(\theta\right)}\left\{ \prod_{i=1}^{n}f\left(X_{i}\mid\theta\right)\right\} \Bigg].
\end{eqnarray*}
By Bernstein--von Mises theorem, 
\[
\lim_{n\rightarrow\infty}\prob\left\{ \left(n\hat{I}\right)^{1/2}\left(\theta-\hat{\theta}\right)\le t\mid X_{1},\ldots,X_{n}\right\} =\Phi\left(t\right),\ascv\prob^{\infty}\left(\theta_{0}\right).
\]
Hence, the result holds.
\end{proof}
To prove a similar result about conditioning on posterior mean, we
go through similar steps as in the proof of Theorem 1.
The only needed change is to prove 
\[
\lim_{n\rightarrow\infty}\prob\left[\left(n\hat{I}\right)^{1/2}\left\{ \theta-E\left(\theta\mid X_{1},\ldots,X_{n}\right)\right\} \le t\mid X_{1},\ldots,X_{n}\right]=\Phi\left(t\right),\ascv\prob^{\infty}\left(\theta_{0}\right).
\]
We know from \cite{ghosh2011moment} that with probability 1, (2.2)
holds. By conditioning on $X_{1},\ldots,X_{n}$, both posterior mean
and maximum likelihood estimator are fixed numbers and 
\[
\lim_{n\rightarrow\infty}n^{1/2}\left\{ E\left(\theta\mid X_{1},\ldots,X_{n}\right)-\hat{\theta}\right\} =0.\ascv.
\]
 Hence, {{} }\textcolor{black}{if we assume the CDF of
the full posterior is continuous and asymptotically normal, then 
\begin{eqnarray*}
 &  & \left|\prob\left[\left(n\hat{I}\right)^{1/2}\left\{ \theta-E\left(\theta\mid X_{1},\ldots,X_{n}\right)\right\} \le t\mid X_{1},\ldots,X_{n}\right]-\Phi\left(t\right)\right|\\
 & \le & \Bigg|\prob\left[\left(n\hat{I}\right)^{1/2}\left(\theta-\hat{\theta}\right)+\hat{I}^{1/2}n^{1/2}\left\{ E\left(\theta\mid X_{1},\ldots,X_{n}\right)-\hat{\theta}\right\} \le t\mid X_{1},\ldots,X_{n}\right]\\
 &  & -\prob\left\{ \left(n\hat{I}\right)^{1/2}\left(\theta-\hat{\theta}\right)\le t\mid X_{1},\ldots,X_{n}\right\} \Bigg|\\
 &  & +\left|\prob\left\{ \left(n\hat{I}\right)^{1/2}\left(\theta-\hat{\theta}\right)\le t\mid X_{1},\ldots,X_{n}\right\} -\Phi\left(t\right)\right|\\
 & \rightarrow & 0,\mathrm{\: as\:}\left(n\rightarrow\infty\right).
\end{eqnarray*}
}

\section{Derivation of Examples}

\subsection{\label{sub:Derivation-of-Example-1}Derivation of Example 1}

First, we get the distribution of $\mu$. 
\[
\mu^{r_{2}}\exp\left(-\mu\frac{r_{2}}{\hat{\mu}}\right)\left|-\frac{r_{2}}{\left(\hat{\mu}\right)^{2}}\right|\propto r_{2}\mu^{r_{2}}\exp\left(-\frac{\mu}{\hat{\mu}}r_{2}\right).
\]
Now we sum over $r_{2}$. By definition $X_{0}+r_{1}-r_{2}\ge0$.
Hence the distribution of $\hat{\mu}$ is proportional to 
\[
\sum_{r_{2}=0}^{X_{0}+r_{1}}r_{2}\left\{ \mu\exp\left(-\frac{\mu}{\hat{\mu}}\right)\right\} ^{r_{2}}.
\]
Let $U=\mu\exp\left(-\mu/\hat{\mu}\right)$ and $R=X_{0}+r_{1}=X_{0}+\hat{\lambda}T$.
Let 
\begin{eqnarray*}
L & = & \sum_{r_{2}=0}^{R}r_{2}U^{r_{2}}=U\frac{\diff}{\diff U}\sum_{r_{2}=0}^{R}U^{r_{2}}=U\frac{\diff}{\diff U}\left(\frac{1-U^{R+1}}{1-U}\right)\\
 & = & U\left(1-U\right)^{-2}\left\{ 1-\left(R+1\right)U^{R}+RU^{R+1}\right\} .
\end{eqnarray*}
For fixed $t$, consider $\mu=\hat{\mu}+t/T^{1/2}$. 
\begin{eqnarray*}
\log U & = & -\frac{\mu}{\hat{\mu}}+\log\mu=-1-\frac{t}{T^{1/2}\hat{\mu}}+\log\hat{\mu}+\log\left(1+\frac{t}{T^{1/2}\hat{\mu}}\right)\\
 & = & \log\hat{\mu}-1-\frac{t^{2}}{2\left(\hat{\mu}\right)^{2}T}+o\left(T^{-1}\right).
\end{eqnarray*}
\textcolor{black}{Hence 
\[
\lim_{T\rightarrow\infty}U=\frac{\mu_{0}}{e},\ascv..
\]
Next, 
\[
R\log U=\left(X_{0}+\hat{\lambda}T\right)\left(\log\hat{\mu}-1\right)-\frac{\hat{\lambda}t^{2}}{2\left(\hat{\mu}\right)^{2}}+o\left(1\right).
\]
\[
L=\frac{UR}{\left(1-U\right)^{2}}\left[\frac{1}{R}-\frac{R+1}{R}\exp\left(R\log U\right)+U\exp\left(R\log U\right)\right]
\]
Note that the density of $t=T^{1/2}\left(\mu-\hat{\mu}\right)$ is
only proportional to $L$, hence only the terms containing $t$ will
affect the limit distribution, other terms can be omitted. Also recall
that, 
\[
\lim_{T\rightarrow\infty}\frac{1}{R}=\lim_{T\rightarrow\infty}\frac{1}{X_{0}+\hat{\lambda}T}=0.
\]
Thus 
\begin{eqnarray*}
\lim_{T\rightarrow\infty}L & \propto & \lim_{T\rightarrow\infty}\left\{ \frac{1}{R}-\frac{R+1}{R}\exp\left(R\log U\right)+U\exp\left(R\log U\right)\right\} \\
 & = & \lim_{T\rightarrow\infty}\left(U-1-\frac{1}{R}\right)\exp\left\{ \left(X_{0}+\hat{\lambda}T\right)\left(\log\hat{\mu}-1\right)-\frac{\hat{\lambda}t^{2}}{2\left(\hat{\mu}\right)^{2}}+o\left(1\right)\right\} \\
 & = & \lim_{T\rightarrow\infty}\left(\frac{\mu_{0}}{e}-1\right)\exp\left\{ \left(X_{0}+\hat{\lambda}T\right)\left(\log\hat{\mu}-1\right)\right\} \exp\left\{ -\frac{\hat{\lambda}t^{2}}{2\left(\hat{\mu}\right)^{2}}+o\left(1\right)\right\} \\
 & \propto & \exp\left\{ -\frac{\hat{\lambda}t^{2}}{2\left(\hat{\mu}\right)^{2}}\right\} .
\end{eqnarray*}
}

\subsection{\label{sub:Derivation-of-Example-2}Derivation of Example 2}

We know $\overline{X}\sim\mathrm{Gamma}\left(n\alpha,\beta/n\right),$
so $\tilde{\alpha}\sim\mathrm{Gamma}\left(n\alpha,n^{-1}\right).$
\begin{eqnarray*}
\pi\left(\alpha\mid\tilde{\alpha}\right) & \propto & \pi\left(\tilde{\alpha}\mid\alpha\right)\pi\left(\alpha\right)=\frac{1}{\Gamma\left(n\alpha\right)n^{-n\alpha}}\left(\tilde{\alpha}\right)^{n\alpha-1}\exp\left(-n\tilde{\alpha}\right)\exp\left(-\lambda\alpha\right)\\
 & \propto & \frac{\left\{ n\tilde{\alpha}\exp\left(-\lambda/n\right)\right\} ^{n\alpha}}{\Gamma\left(n\alpha\right)}.
\end{eqnarray*}
Next we will show $\pi\left(n^{1/2}\left(\alpha-\tilde{\alpha}\right)/b\mid\tilde{\alpha}\right)\rightarrow N\left(0,1\right)\ascv,$
for some suitable $b$. The PDF of $t$ is proportional to 
\[
\frac{\left\{ n\tilde{\alpha}\exp\left(-\lambda/n\right)\right\} ^{n\left(bt/n^{1/2}+\tilde{\alpha}\right)}}{\Gamma\left\{ n\left(bt/n^{1/2}+\tilde{\alpha}\right)\right\} }\frac{b}{n^{1/2}}.
\]
Take logarithm, and drop all the terms not related to $t$, since
those terms can be divided from both numerator and denominator, 
\begin{equation}
n^{1/2}bt\log\left(n\tilde{\alpha}\right)-\lambda b\frac{t}{n^{1/2}}-\log\Gamma\left\{ n\tilde{\alpha}\left(1+\frac{bt}{n^{1/2}\tilde{\alpha}}\right)\right\} .\label{eq:log-density-drop-not-have-t}
\end{equation}
Using Stirling formula to approximate gamma function, 
\begin{eqnarray*}
 &  & \log\Gamma\left\{ n\tilde{\alpha}\left(1+\frac{bt}{n^{1/2}\tilde{\alpha}}\right)\right\} \\
 & \approx & \left\{ n\tilde{\alpha}\left(1+\frac{bt}{n^{1/2}\tilde{\alpha}}\right)-\frac{1}{2}\right\} \log\left\{ n\tilde{\alpha}\left(1+\frac{bt}{n^{1/2}\tilde{\alpha}}\right)\right\} -n\tilde{\alpha}\left(1+\frac{bt}{n^{1/2}\tilde{\alpha}}\right)+\frac{1}{2}\log2\pi.
\end{eqnarray*}
So (\ref{eq:log-density-drop-not-have-t}) can be written as 
\begin{eqnarray}
 &  & n^{1/2}bt\log\left(n\tilde{\alpha}\right)-\lambda b\frac{t}{n^{1/2}}-n^{1/2}bt\log\left(1+\frac{bt}{n^{1/2}\tilde{\alpha}}\right)\nonumber \\
 &  & -n^{1/2}bt\log\left(n\tilde{\alpha}\right)-\left(n\tilde{\alpha}-\frac{1}{2}\right)\log\left(1+\frac{bt}{n^{1/2}\tilde{\alpha}}\right)+n^{1/2}bt\nonumber \\
 & = & -n^{1/2}bt\log\left(1+\frac{bt}{n^{1/2}\tilde{\alpha}}\right)-\left(n\tilde{\alpha}-\frac{1}{2}\right)\log\left(1+\frac{bt}{n^{1/2}\tilde{\alpha}}\right)\label{eq:log-density-after-stirling}\\
 &  & +n^{1/2}bt-\lambda b\frac{t}{n^{1/2}}.\nonumber 
\end{eqnarray}
Now we can apply Taylor expansion for term $\log\left\{ 1+bt/\left(n^{1/2}\tilde{\alpha}\right)\right\} $,
\[
\log\left(1+\frac{bt}{n^{1/2}\tilde{\alpha}}\right)=\frac{bt}{n^{1/2}\tilde{\alpha}}-\frac{b^{2}t^{2}}{2n\tilde{\alpha}^{2}}+\frac{b^{3}t^{3}}{3n^{3/2}\tilde{\alpha}^{3}}+o\left(\frac{t^{3}}{n^{3/2}}\right).
\]
Substituting the expansion into (\ref{eq:log-density-after-stirling}),
\begin{eqnarray*}
 &  & -n^{1/2}bt\log\left(1+\frac{bt}{n^{1/2}\tilde{\alpha}}\right)-\left(n\tilde{\alpha}-\frac{1}{2}\right)\log\left(1+\frac{bt}{n^{1/2}\tilde{\alpha}}\right)+n^{1/2}bt-\lambda b\frac{t}{n^{1/2}}\\
 & = & -n^{1/2}bt\left\{ \frac{bt}{n^{1/2}\tilde{\alpha}}-\frac{b^{2}t^{2}}{2n\tilde{\alpha}^{2}}+\frac{b^{3}t^{3}}{3n^{3/2}\tilde{\alpha}^{3}}+o\left(\frac{t^{3}}{n^{3/2}}\right)\right\} \\
 &  & -\left(n\tilde{\alpha}-\frac{1}{2}\right)\left\{ \frac{bt}{n^{1/2}\tilde{\alpha}}-\frac{b^{2}t^{2}}{2n\tilde{\alpha}^{2}}+\frac{b^{3}t^{3}}{3n^{3/2}\tilde{\alpha}^{3}}+o\left(\frac{t^{3}}{n^{3/2}}\right)\right\} +n^{1/2}bt-\lambda b\frac{t}{n^{1/2}}\\
 & = & -\frac{b^{2}t^{2}}{\tilde{\alpha}}+\frac{b^{3}t^{3}}{2n^{1/2}\tilde{\alpha}^{2}}-o\left(\frac{t^{3}}{n^{1/2}}\right)-n^{1/2}bt+\frac{b^{2}t^{2}}{2\tilde{\alpha}}-\frac{b^{3}t^{3}}{2n^{1/2}\alpha^{2}}-o\left(\frac{t^{3}}{n^{1/2}}\right)\\
 &  & +\frac{bt}{2n^{1/2}\tilde{\alpha}}-\frac{b^{2}t^{2}}{4n\tilde{\alpha}^{2}}+o\left(\frac{t^{2}}{n}\right)+n^{1/2}bt-\lambda b\frac{t}{n^{1/2}}\\
 & \approx & -\frac{b^{2}t^{2}}{2\tilde{\alpha}}.
\end{eqnarray*}
If we set $b=\tilde{\alpha}^{1/2}$, then the rescaled partial posterior
convergence to standard normal.

\subsection{\label{sub:Derivation-of-Example-3}Derivation of Example 3}

\textcolor{black}{First we prove lemma 1.}
\begin{proof}
\textcolor{black}{Assume $Y$ has a probability density function $f\left(y\right)$.
Let $y=u\left(x,\theta\right)$ be the solution of equation $x=h\left(y,\theta\right)$.
Then $h\left(Y,\theta\right)$ has a probability density function
\[
f\left(u\left(x,\theta\right)\right)\left|\frac{\partial u\left(x,\theta\right)}{\partial x}\right|.
\]
Then the posterior distribution under the uniform prior is proportional
to 
\[
f\left(u\left(x,\theta\right)\right)\left|\frac{\partial u\left(x,\theta\right)}{\partial x}\right|.
\]
Now we find the probability density function of $g\left(Y,X\right)$.
By assumptions, we know $y=u\left(x,\theta\right)$ is also the solution
of $\theta=g\left(y,x\right)$. Hence the probability density function
of $g\left(Y,X\right)$ is also 
\[
f\left(u\left(x,\theta\right)\right)\left|\frac{\partial u\left(x,\theta\right)}{\partial x}\right|.
\]
}
\end{proof}

We know if $X,Y$ are independent exponential random variables with
mean $\lambda$, then $X-Y$ has a double exponential distribution
with $\mu=0$ and the same $\lambda$. So we know our sample $Z$
has the same distribution as $X-Y+\mu$. So the sample mean $\overline{Z}$
has the same distribution as $\overline{X}-\overline{Y}+\mu$. It
is easy to check $\overline{X}$ and $\overline{Y}$ have gamma distribution
with location parameter $n$ and scale parameter $n^{-1}\lambda$.
Hence the posterior distribution of $\mu$on $\overline{Z}$ under
the uniform prior has the same distribution as $\overline{Z}-\left(\overline{X}-\overline{Y}\right)$.
Hence the posterior distribution $n^{1/2}\left(\mu-\overline{Z}\right)$
has the same distribution as $-n^{1/2}\left(\overline{X}-\overline{Y}\right)$
We know the characteristic function of $n^{1/2}$$\overline{X}$ is
\[
\left\{ 1-\frac{\lambda}{n}i\left(n^{1/2}t\right)\right\} ^{-n},
\]
So the characteristic function of $-n^{1/2}\left(\overline{X}-\overline{Y}\right)$
is 
\[
\left\{ 1-\frac{\lambda}{n}i\left(n^{1/2}t\right)\right\} ^{-n}\left\{ 1-\frac{\lambda}{n}i\left(-n^{1/2}t\right)\right\} ^{-n}=\left(1+\frac{\lambda^{2}t^{2}}{n}\right)^{-n}\rightarrow\exp\left(-\lambda^{2}t^{2}\right).
\]
Hence $n^{/1}\left(\mu-\overline{Z}\right)$ has an asymptotic normal
distribution with zero mean and variance $2\lambda^{2}$.

\section{Proof of Theorem 2}\label{sec:Proof-of-Theorem-2}
\begin{lemma}
\label{lem:taylor-expansion-in-dist}Under Assumptions 1,
2 and 3,
for any $\varepsilon$, $\delta_{1}$ and $\delta_{2}$, there exists
an $N$, such that for any $n\ge N$, 
\[
\prob_{\theta_{0}}^{\infty}\left[\omega:\prob_{\omega}^{n}\left\{ n^{1/2}\left|G\left(\theta,\tilde{\theta}\right)-G_{1}\left(\hat{\theta},\tilde{\theta}\right)\left(\theta-\tilde{\theta}\right)\right|\le2\varepsilon\right\} \ge1-\delta_{1}\right]\ge1-\delta_{2}.
\]
\end{lemma}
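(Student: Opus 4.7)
The plan is to view the lemma as a quantitative Taylor-expansion statement: Assumption 1 provides a uniform $L$-bound on the second derivative of $G$ in its first argument throughout a neighbourhood of $\theta_{0}$, Assumption 2 controls $|\hat{\theta} - \tilde{\theta}|$ at rate $n^{-1/2}$, and Assumption 3 controls the posterior concentration of $\theta$ around $\hat{\theta}$ at the same rate. Together these should make the discrepancy inside the absolute value of order $n^{-1}$, so that multiplying by $n^{1/2}$ still leaves something vanishing.

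The algebraic step I would carry out first is to Taylor expand both $G(\theta, \tilde{\theta})$ and $G(\tilde{\theta}, \tilde{\theta})$ in the first argument around $\hat{\theta}$ and then subtract. Invoking the population-level $M$-estimating identity $G(\tilde{\theta}, \tilde{\theta}) = 0$, which follows from Assumption 4 together with the defining equation of $\tilde{\theta}$, yields the clean identity
\[
G(\theta, \tilde{\theta}) - G_{1}(\hat{\theta}, \tilde{\theta})(\theta - \tilde{\theta}) = \tfrac{1}{2} G_{11}(\theta^{*}, \tilde{\theta})(\theta - \hat{\theta})^{2} - \tfrac{1}{2} G_{11}(\theta^{**}, \tilde{\theta})(\tilde{\theta} - \hat{\theta})^{2}
\]
for some intermediate $\theta^{*}, \theta^{**}$, and Assumption 1 then bounds both Hessians by $L$.

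Next, I would translate the asymptotic rates into the explicit $(\delta_{1}, \delta_{2})$ form demanded by the lemma. Given $\varepsilon, \delta_{1}, \delta_{2} > 0$, I would choose $M$ so large that Assumption 2 forces $|\hat{\theta} - \tilde{\theta}| \le M n^{-1/2}$ with $\prob_{\theta_{0}}^{\infty}$-probability at least $1 - \delta_{2}/2$, and such that, by Bernstein--von Mises (Assumption 3), the posterior mass of $\{|\theta - \hat{\theta}| \le M n^{-1/2}\}$ exceeds $1 - \delta_{1}$ on a sample set of $\prob_{\theta_{0}}^{\infty}$-probability at least $1 - \delta_{2}/2$ once $n \ge N_{1}$. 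On the intersection of these two sample sets the displayed identity bounds $n^{1/2}$ times the discrepancy by $L M^{2} n^{-1/2}$, which is at most $2\varepsilon$ once $n \ge N_{2} = (L M^{2}/2\varepsilon)^{2}$; setting $N = \max(N_{1}, N_{2})$ and applying a union bound over the two sample sets completes the argument.

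The main obstacle is the two-layer probability bookkeeping: Bernstein--von Mises is a statement about the posterior for $\prob_{\theta_{0}}^{\infty}$-almost every $\omega$, whereas the inner probability in the lemma is the posterior for a fixed $\omega$. Making the argument uniform enough to unscramble the two measures requires simultaneously localizing the posterior draw $\theta$, the MLE $\hat{\theta}$, and the $M$-estimator $\tilde{\theta}$ to a single shrinking $n^{-1/2}$-neighbourhood of $\theta_{0}$ on which Assumption 1 is valid. Once that localization is secured the Taylor bound is routine; the delicate part is simply making sure that the same event carries all three concentration statements at once.
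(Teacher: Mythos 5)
Your proof is correct and takes essentially the same route as the paper's: the identical Taylor decomposition around $\hat{\theta}$ with the remainder bounded by $L$ via Assumption 1, Bernstein--von Mises plus posterior consistency to make $n^{1/2}(\theta-\hat{\theta})^{2}$ small in posterior probability, and root-$n$ consistency of $\hat{\theta}$ and $\tilde{\theta}$ for the $(\tilde{\theta}-\hat{\theta})^{2}$ term, differing only in that you carry out explicit $M$--$\delta$ bookkeeping where the paper uses $o_{P}/O_{P}$ calculus. The one slip is attributing $G(\tilde{\theta},\tilde{\theta})=0$ to Assumption 4; it actually rests on the uniqueness of the root of the population estimating equation, which the paper assumes separately (and only states inside the proof of Theorem 2).
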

\begin{proof}
By Taylor expansion and Assumption 1,
\begin{equation}
\left|G\left(\theta,\tilde{\theta}\right)-G\left(\hat{\theta},\tilde{\theta}\right)-G_{1}\left(\hat{\theta},\tilde{\theta}\right)\left(\theta-\hat{\theta}\right)\right|\le L\left(\theta-\hat{\theta}\right)^{2},\label{eq:second-lip-1}
\end{equation}
and 
\begin{equation}
\left|G\left(\tilde{\theta},\tilde{\theta}\right)-G\left(\hat{\theta},\tilde{\theta}\right)-G_{1}\left(\hat{\theta},\tilde{\theta}\right)\left(\tilde{\theta}-\hat{\theta}\right)\right|\le L\left(\tilde{\theta}-\hat{\theta}\right)^{2}.\label{eq:second-lip-2}
\end{equation}
By posterior consistency, there exists a $\Omega_{1}\subset\Omega$,
$\prob_{\theta_{0}}^{\infty}\left(\Omega_{1}\right)=1$, such that
for any $\omega\in\Omega_{1}$, conditioned on $X_{1}\left(\omega\right),\ldots,X_{n}\left(\omega\right)$, $\theta$
converges in probability to $\theta_{0}$. Hence conditioned on $X_{1}\left(\omega\right),\ldots,X_{n}\left(\omega\right)$,  $ \left(\theta-\theta_{0}\right) =o_{\prob_{\omega}^{n}}\left(1\right)$.
By Bernstein--von Mises, there exists a $\Omega_{2}\subset\Omega$,
$\prob_{\theta_{0}}^{\infty}\left(\Omega_{2}\right)=1$, such that
for any $\omega\in\Omega_{2}$, conditioned on $X_{1}\left(\omega\right),\ldots,X_{n}\left(\omega\right)$ , $\left(n\hat{I}\right)^{1/2}\left(\theta-\hat{\theta}\right)$
converges in distribution to a standard normal random variable. Hence
conditioned on $X_{1}\left(\omega\right),\ldots,X_{n}\left(\omega\right)$, $ \left(n\hat{I}\right)^{1/2}\left(\theta-\hat{\theta}\right) =O_{\prob_{\omega}^{n}}\left(1\right).$
For any $\omega\in\Omega_{1}\cap\Omega_{2}$, 
\begin{eqnarray*}
 &  & \left\{ n^{1/2}L\left(\theta-\hat{\theta}\right)^{2}\mid X_{1}\left(\omega\right),\ldots,X_{n}\left(\omega\right)\right\} \\
 & = & L\left\{ n^{1/2}\left(\theta-\hat{\theta}\right)\times\left(\theta-\hat{\theta}\right)\mid X_{1}\left(\omega\right),\ldots,X_{n}\left(\omega\right)\right\} \\
 & = & LO_{\prob_{\omega}^{n}}\left(1\right)\times o_{\prob_{\omega}^{n}}\left(1\right)=o_{\prob_{\omega}^{n}}\left(1\right),
\end{eqnarray*}
which means for any $\varepsilon$ and $\delta_{1}$, there exists
an $N_{1}$ such that 
\[
\prob_{\omega}^{n}\left[n^{1/2}L\left\{ \theta-\hat{\theta}\left(\omega\right)\right\} ^{2}\le\varepsilon\mid X_{1}\left(\omega\right),\ldots,X_{n}\left(\omega\right)\right]\ge1-\delta_{1}.
\]
By Assumptions 2, $\tilde{\theta}-\theta_{0}=o_{\prob_{\theta_{0}}^{\infty}}\left(1\right)$,
$n^{1/2}\left(\tilde{\theta}-\theta_{0}\right)=O_{\prob_{\theta_{0}}^{\infty}}\left(1\right)$,
$\hat{\theta}-\theta_{0}=o_{\prob_{\theta_{0}}^{\infty}}\left(1\right)$
and $n^{1/2}\left(\hat{\theta}-\theta_{0}\right)=O_{\prob_{\theta_{0}}^{\infty}}\left(1\right).$Then
\[
n^{1/2}L\left(\tilde{\theta}-\hat{\theta}\right)^{2}\le2L\left\{ n^{1/2}\left(\tilde{\theta}-\theta_{0}\right)^{2}+n^{1/2}\left(\hat{\theta}-\theta_{0}\right)^{2}\right\} =o_{\prob_{\theta_{0}}^{\infty}}\left(1\right),
\]
which means for any $\varepsilon$ and $\delta_{2}$, there exists
an $N_{2}$, such that for any $n\ge N_{2}$, 
\[
\prob_{\theta_{0}}^{\infty}\left[\omega:n^{1/2}L\left\{ \tilde{\theta}\left(\omega\right)-\hat{\theta}\left(\omega\right)\right\} ^{2}\le\varepsilon\right]\ge1-\delta_{2}.
\]
Let $\Omega_{\varepsilon}=\left\{ \omega:n^{1/2}L\left\{ \tilde{\theta}\left(\omega\right)-\hat{\theta}\left(\omega\right)\right\} ^{2}\le\varepsilon\right\} $.
For any $\omega\in\Omega_{1}\cap\Omega_{2}\cap\Omega_{\varepsilon}$,
\begin{eqnarray*}
 &  & n^{1/2}\left|G\left(\theta,\tilde{\theta}\right)-G_{1}\left(\hat{\theta},\tilde{\theta}\right)\left(\theta-\tilde{\theta}\right)\right|\\
 & \le & n^{1/2}\left|G\left(\theta,\tilde{\theta}\right)-G\left(\hat{\theta},\tilde{\theta}\right)-G_{1}\left(\hat{\theta},\tilde{\theta}\right)\left(\theta-\hat{\theta}\right)\right|\\
 &  & +n^{1/2}\left|G\left(\tilde{\theta},\tilde{\theta}\right)-G\left(\hat{\theta},\tilde{\theta}\right)-G_{1}\left(\hat{\theta},\tilde{\theta}\right)\left(\tilde{\theta}-\hat{\theta}\right)\right|\\
 & \le & n^{1/2}L\left(\theta-\hat{\theta}\right)^{2}+n^{1/2}L\left(\tilde{\theta}-\hat{\theta}\right)^{2}\le2\varepsilon,
\end{eqnarray*}
with probability $1-\delta_{1}$ $\left(\prob_{\omega}^{n}\right)$.
Also recall $\prob_{\theta_{0}}^{\infty}\left(\Omega_{1}\cap\Omega_{2}\cap\Omega_{\varepsilon}\right)\ge1-\delta_{2}$.
Hence for $\omega\in\Omega_{1}\cap\Omega_{2}\cap\Omega_{\varepsilon}$
and $n\ge\max\left\{ N_{1},N_{2}\right\} $, 
\[
\prob_{\theta_{0}}^{\infty}\left[\omega:P_{\omega}^{n}\left\{ n^{1/2}\left|G\left(\theta,\tilde{\theta}\right)-G_{1}\left(\hat{\theta},\tilde{\theta}\right)\left(\theta-\tilde{\theta}\right)\right|\le2\varepsilon\right\} \ge1-\delta_{1}\right]\ge1-\delta_{2}.
\]
\end{proof}
\begin{remark}
This result is weaker than the settings in posterior consistency and
Bernstein--von Mises theorem. In posterior consistency, $\left(\theta\mid X_{1},\ldots,X_{n}\right)$
converges in probability $\left(\prob_{\omega}^{n}\right)$ almost
surely in $\prob_{\theta_{0}}^{\infty}$. A similar comment applies to the Bernstein--von Mises theorem. However, in this lemma, the posterior
distribution converges with a large probability. 
\begin{remark}
There is slight difference between the rescaled posterior random variable
in this lemma and in 
\begin{eqnarray}
 &  & n^{1/2}\Bigg[\int_{\mathbb{R}}g\left(x,\tilde{\theta}\right)\pi\left(x\mid\theta\right)\diff x-\int_{\mathbb{R}}g\left(x,\tilde{\theta}\right)\pi\left(x\mid\tilde{\theta}\right)\diff x\label{eq:strong-vanish-error}\\
 &  & -\left\{ \left.\frac{\diff}{\diff\theta}\int_{\mathbb{R}}g\left(x,\tilde{\theta}\right)\pi\left(x\mid\theta\right)\diff x\right|_{\theta=\tilde{\theta}}\right\} \left(\theta-\tilde{\theta}\right)\Bigg]\rightarrow0,\ascv.\nonumber 
\end{eqnarray}
 The first order differential term is $G_{1}\left(\tilde{\theta},\tilde{\theta}\right)$.
However, since $G_{1}\left(\tilde{\theta},\tilde{\theta}\right)\rightarrow G_{1}\left(\theta_{0},\theta_{0}\right)$
and $G_{1}\left(\hat{\theta},\tilde{\theta}\right)\rightarrow G_{1}\left(\theta_{0},\theta_{0}\right)$
almost surely in $P_{\theta_{0}}^{\infty}$ and $n^{1/2}$ term is
absorbed by $\left(\theta-\tilde{\theta}\right)$, we have proved
a weak version of (\ref{eq:strong-vanish-error}). 
\end{remark}
\end{remark}
\begin{proof}
Under the Assumptions 1,
2 and 3,
we have the result from Lemma \ref{lem:taylor-expansion-in-dist},
\begin{equation}
\prob_{\theta_{0}}^{\infty}\left[\omega:\prob_{\omega}^{n}\left\{ n^{1/2}\left|G\left(\theta,\tilde{\theta}\right)-G_{1}\left(\hat{\theta},\tilde{\theta}\right)\left(\theta-\tilde{\theta}\right)\right|\le2\varepsilon\right\} \ge1-\delta_{1}\right]\ge1-\delta_{2}.\label{eq:conv-in-dist-reminder}
\end{equation}
Let $\Omega_{1}=\left\{ \omega:\prob_{\omega}^{n}\left\{ n^{1/2}\left|G\left(\theta,\tilde{\theta}\right)-G_{1}\left(\hat{\theta},\tilde{\theta}\right)\left(\theta-\tilde{\theta}\right)\right|\le2\varepsilon\right\} \ge1-\delta_{1}\right\} $, and 
\[ 
C_{n}=E_{\pi\left(\theta\right)}\left(E_{\prob^{\infty}\left(\theta_{0}\right)}\left[I_{\left\{ \tilde{\theta}\in O\left(\theta_{0},\varepsilon\right)\right\} }\exp\left\{ \sum_{i=1}^{n}\log f\left(X_{i}\mid\theta\right)-\log f\left(X_{i}\mid\theta_{0}\right)\right\} \right]\right).\]

Now by the same technique used in the proof of Theorem 1,
we have for sufficiently large $N$, 
\begin{eqnarray*}
 &  & \left|\prob\left\{ \left.\frac{n^{1/2}\left(\theta-\tilde{\theta}\right)}{\tilde{V}^{1/2}}\le t\right|\tilde{\theta}\in O\left(\theta_{0},\varepsilon\right)\right\} -\Phi\left(t\right)\right|\\
 & = & \Bigg|E_{\pi\left(\theta\right)}\left(E_{\prob^{\infty}\left(\theta_{0}\right)}\left[I_{\left\{ \tilde{\theta}\in O\left(\theta_{0},\varepsilon\right)\right\} }I_{\left\{ n^{1/2}\tilde{V}^{-1/2}\left(\theta-\tilde{\theta}\right)\le t\right\} }\prod_{i=1}^{n}f\left(X_{i}\mid\theta\right)/f\left(X_{i}\mid\theta_{0}\right)\right]\right)\\
 &  & \Bigg/E_{\pi\left(\theta\right)}\left(E_{\prob^{\infty}\left(\theta_{0}\right)}\left[I_{\left\{ \tilde{\theta}\in O\left(\theta_{0},\varepsilon\right)\right\} }\prod_{i=1}^{n}f\left(X_{i}\mid\theta\right)/f\left(X_{i}\mid\theta_{0}\right)\right]\right)-\Phi\left(t\right)\Bigg|\\
 & \le & C_{n}^{-1}E_{\pi\left(\theta\right)}\Bigg(E_{\prob^{\infty}\left(\theta_{0}\right)}\Bigg[I_{\Omega_{1}}\left|\prob\left\{ n^{1/2}\tilde{V}^{-11/2}\left(\theta-\tilde{\theta}\right)\le t\mid X_{1},\ldots,X_{n}\right\} -\Phi\left(t\right)\right|\\
 &  & \prod_{i=1}^{n}f\left(X_{i}\mid\theta\right)/f\left(X_{i}\mid\theta_{0}\right)\Bigg]\Bigg)\\
 &  & +C_{n}^{-1}E_{\pi\left(\theta\right)}\Bigg(E_{\prob^{\infty}\left(\theta_{0}\right)}\Bigg[I_{\Omega_{1}^{c}}\left|\prob\left\{ n^{1/2}\tilde{V}^{-11/2}\left(\theta-\tilde{\theta}\right)\le t\mid X_{1},\ldots,X_{n}\right\} -\Phi\left(t\right)\right|\\
 &  & \prod_{i=1}^{n}f\left(X_{i}\mid\theta\right)/f\left(X_{i}\mid\theta_{0}\right)\Bigg]\Bigg).
\end{eqnarray*}

First considering the samples within $\Omega_{1}$, by (\ref{eq:conv-in-dist-reminder}),
$V_{0}^{-1/2}\left[n^{1/2}\left\{ G_{1}\left(\tilde{\theta},\tilde{\theta}\right)\left(\theta-\tilde{\theta}\right)-G\left(\theta,\tilde{\theta}\right)\right\} \right]$
converges in probability to 0. By Theorem 2.1 in \cite{rivoirard2012bernstein},
we have, 
\begin{eqnarray}
 &  & \lim_{n\rightarrow\infty}\sup_{t\in\mathbb{R}}\Bigg|\prob\Bigg[n^{1/2}V_{0}^{-1/2}\left\{ G\left(\theta,\tilde{\theta}\right)-\frac{1}{n}\sum_{i=1}^{n}g\left(X_{i},\tilde{\theta}\right)\right\} \label{eq:conv-in-dist-est-eq}\\
 &  & \le t\mid X_{1},\ldots,X_{n}\Bigg]-\Phi\left(t\right)\Bigg|=0,\ascv\theta_{0},\nonumber 
\end{eqnarray}
where 
\[
V_{0}=\int_{\mathbb{R}}\left\{ g\left(x,\tilde{\theta}\right)-\int_{\mathbb{R}}g\left(y,\tilde{\theta}\right)\pi\left(y\mid\theta_{0}\right)\diff y\right\} ^{2}\pi\left(x\mid\theta_{0}\right)\diff x.
\]
Hence, $n^{1/2}V_{0}^{-1/2}\left\{ G\left(\theta,\tilde{\theta}\right)-n^{-1}\sum_{i=1}^{n}g\left(X_{i},\tilde{\theta}\right)\right\} $
converges in distribution to standard normal distribution. By the
definition of an $M$-estimator, $n^{-1}\sum_{i=1}^{n}g\left(X_{i},\tilde{\theta}\right)=0$.
Assume that for every $\theta$, $\int_{\mathbb{R}}g\left(x,t\right)\pi\left(x\mid\theta\right)\diff x=0$
has only one solution $t=\theta$, then 
\[
G\left(\tilde{\theta},\tilde{\theta}\right)=\int_{\mathbb{R}}g\left(x,\tilde{\theta}\right)\pi\left(x\mid\tilde{\theta}\right)\diff x=0.
\]
Hence, by Slutsky's Theorem,
\begin{eqnarray*}
n^{1/2}\tilde{V}^{-1/2}\left(\theta-\tilde{\theta}\right) & = & n^{1/2}V_{0}^{-1/2}\left\{ G\left(\theta,\tilde{\theta}\right)-\frac{1}{n}\sum_{i=1}^{n}g\left(X_{i},\tilde{\theta}\right)\right\} \\
 &  & +V_{0}^{-1/2}\left[n^{1/2}\left\{ G_{1}\left(\tilde{\theta},\tilde{\theta}\right)\left(\theta-\tilde{\theta}\right)-G\left(\theta,\tilde{\theta}\right)\right\} \right],
\end{eqnarray*}
 converges in distribution to the standard normal distribution. Hence
for large $N$, 
\[
\sup_{t\in\mathbb{R}}\left|\prob\left\{ n^{1/2}\tilde{V}^{-1/2}\left(\theta-\tilde{\theta}\right)\le t\mid X_{1},\ldots,X_{n}\right\} -\Phi\left(t\right)\right|\le\varepsilon,
\]
and 
\begin{eqnarray}
 &  & C_{n}^{-1}E_{\pi\left(\theta\right)}\Bigg(E_{\prob^{\infty}\left(\theta_{0}\right)}\Bigg[I_{\Omega_{1}}\sup_{t\in\mathbb{R}}\left|\prob\left\{ n^{1/2}\tilde{V}^{-1/2}\left(\theta-\tilde{\theta}\right)\le t\mid X_{1},\ldots,X_{n}\right\} -\Phi\left(t\right)\right|\nonumber \\
 &  & \prod_{i=1}^{n}f\left(X_{i}\mid\theta\right)/f\left(X_{i}\mid\theta_{0}\right)\Bigg]\Bigg)\nonumber \\
 & \le & \varepsilon C_{n}^{-1}E_{\pi\left(\theta\right)}\left[E_{\prob^{\infty}\left(\theta_{0}\right)}\left\{ I_{\Omega_{1}}\prod_{i=1}^{n}f\left(X_{i}\mid\theta\right)/f\left(X_{i}\mid\theta_{0}\right)\right\} \right]=\varepsilon.\label{eq:conv-with-omega1}
\end{eqnarray}

For samples outside $\Omega_{1}$. It is trivial that 
\[
\sup_{t\in\mathbb{R}}\left|\prob\left\{ n^{1/2}\tilde{V}^{-1/2}\left(\theta-\tilde{\theta}\right)\le t\mid X_{1},\ldots,X_{n}\right\} -\Phi\left(t\right)\right|\le2.
\]
By Assumption 4 and the strong law of large numbers,
and the property of the Kullback-Leibler information number 
\begin{eqnarray*}
 &  & \prod_{i=1}^{n}f\left(X_{i}\mid\theta\right)/f\left(X_{i}\mid\theta_{0}\right)\\
 & = & \exp\left[n\left\{ \frac{1}{n}\sum_{i=1}^{n}\log f\left(X_{i}\mid\theta\right)-\frac{1}{n}\sum_{i=1}^{n}\log f\left(X_{i}\mid\theta_{0}\right)\right\} \right]\le1.\ascv\left(P_{\theta_{0}}\right)
\end{eqnarray*}
Hence
\begin{eqnarray}
 &  & C_{n}^{-1}E_{\pi\left(\theta\right)}\Bigg(E_{\prob^{\infty}\left(\theta_{0}\right)}\Bigg[I_{\Omega_{1}^{c}}\sup_{t\in\mathbb{R}}\left|\prob\left\{ n^{1/2}\tilde{V}^{-1/2}\left(\theta-\tilde{\theta}\right)\le t\mid X_{1},\ldots,X_{n}\right\} -\Phi\left(t\right)\right|\nonumber \\
 &  & \prod_{i=1}^{n}f\left(X_{i}\mid\theta\right)/f\left(X_{i}\mid\theta_{0}\right)\Bigg]\Bigg)\nonumber \\
 & \le & 2C_{n}^{-1}E_{\pi\left(\theta\right)}\left\{ E_{\prob^{\infty}\left(\theta_{0}\right)}\left(I_{\Omega_{1}^{c}}\right)\right\} =2\prob^{\infty}\left(\Omega_{1}^{c}\mid\theta_{0}\right)=2\delta_{2}.\label{eq:conv-outside-omega1}
\end{eqnarray}

Hence, combining (\ref{eq:conv-with-omega1}) and (\ref{eq:conv-outside-omega1}),
\[
\sup_{t\in\mathbb{R}}\left|\prob\left\{ \left.\frac{n^{1/2}\left(\theta-\tilde{\theta}\right)}{\tilde{V}^{1/2}}\le t\right|\tilde{\theta}\in O\left(\theta_{0},\varepsilon\right)\right\} -\Phi\left(t\right)\right|\le\varepsilon+2\delta_{2},\ascv
\]

\end{proof}

\section{Proof of Theorem 3}\label{sec:Proof-of-Theorem-3}
\begin{proof}
By Theorem 2.1 in \cite{rivoirard2012bernstein}, we have 
\begin{eqnarray}
 &  & \lim_{n\rightarrow\infty}\sup_{t\in\mathbb{R}}\Bigg|\prob\Bigg[n^{1/2}\mathrm{var}_{\theta_{0}}\left\{ a^{T}g\left(X\right)\right\} ^{-1/2}\label{eq:thm2-1-inconsist}\\
 &  & \left\{ \int a^{T}g\left(x\right)f\left(x\mid\theta\right)\diff x-\frac{1}{n}\sum_{i=1}^{n}a^{T}g\left(X_{i}\right)\right\} \nonumber \\
 &  & \le t\mid X_{1},\ldots,X_{n}\Bigg]-\Phi\left(t\right)\Bigg|=0,\ascv.\nonumber 
\end{eqnarray}
By Assumption 6 and Slutsky's theorem,
we know $n^{-1/2}\sum_{i=1}^{n}g\left(X_{i}\right)$ has the same
asymptotic distribution as $S$. However, by central limit theorem,
$n^{-1/2}\sum_{i=1}^{n}g\left(X_{i}\right)$ has an asymptotic normal
distribution with variance matrix as $\mathrm{var}_{\theta_{0}}\left\{ g\left(X\right)\right\} $.
Hence, $\mathrm{var}_{\theta_{0}}\left\{ g\left(X\right)\right\} =\Sigma\left(\theta_{0}\right)=\lim_{n\rightarrow\infty}\tilde{\Sigma},\ascv.$
Hence, we can replace $\mathrm{var}_{\theta_{0}}\left\{ g\left(X\right)\right\} $
in (\ref{eq:thm2-1-inconsist}) by its strong consistent estimator,
and get
\begin{eqnarray*}
 &  & \lim_{n\rightarrow\infty}\sup_{t\in\mathbb{R}}\Bigg|\prob\Bigg[n^{1/2}\left(a^{T}\tilde{\Sigma}a\right)^{-1/2}\left\{ \int a^{T}g\left(x\right)f\left(x\mid\theta\right)\diff x-n^{-1}\sum_{i=1}^{n}a^{T}g\left(X_{i}\right)\right\} \\
 &  & \le t\mid X_{1},\ldots,X_{n}\Bigg]-\Phi\left(t\right)\Bigg|=0,\ascv.
\end{eqnarray*}
By Assumption 6, we can replace
$n^{1/2}\left\{ n^{-1}\sum_{i=1}^{n}a^{T}g\left(X_{i}\right)\right\} $
by $n^{1/2}a^{T}S$, and finally obtain
\begin{eqnarray*}
 &  & \lim_{n\rightarrow\infty}\sup_{t\in\mathbb{R}}\Bigg|\prob\Bigg[n^{1/2}\left(a^{T}\tilde{\Sigma}a\right)^{-1/2}\left\{ \int a^{T}g\left(x\right)f\left(x\mid\theta\right)\diff x-a^{T}S\right\} \\
 &  & \le t\mid X_{1},\ldots,X_{n}\Bigg]-\Phi\left(t\right)\Bigg|=0,\ascv.
\end{eqnarray*}
The remainder of the proof uses the same argument used in the proof
of Theorem 1. 
\end{proof}

\section{Proof of Theorem 4}\label{sec:Proof-of-Theorem-4}
\begin{proof}
\textcolor{black}{First we will prove for a single slice, the support
vector machine is robust by showing the normal vectors of separate
hyperplanes $\psi$ are continuous functionals of conditional probability
of $\tilde{\theta}$ given $X_{1},\ldots,X_{n}$, where $\tilde{\theta}=I_{\left[\theta\le s\right]}-I_{\left[\theta>s\right]}$
is sliced $\theta$. Using notations in \cite{li2011principal},
Let 
\[
m\left(\psi,t,X,\tilde{\theta}\right)=\psi^{T}\Sigma\psi+\lambda\left\{ 1-\tilde{\theta}\left(\psi^{T}X-t\right)\right\} ^{+},
\]
be the sample version of the Lagrangian of SVM. The population normal
vectors $\psi$ are defined as the solution of the first order condition
of the optimization problem in SVM,
\begin{equation}
0=D_{\left(\psi,t\right)}E\left\{ m\left(\psi,t,X,\tilde{\theta}\right)\right\} =\left(2\psi^{T}\Sigma,0\right)^{T}-\lambda E\left[\left(X,-1\right)^{T}\tilde{\theta}I_{\left\{ 1-\tilde{\theta}\left(\psi^{T}X-t\right)>0\right\} }\right],\label{eq:first-order-svm}
\end{equation}
where $D_{\left(\psi,t\right)}$ are partial derivatives over $\psi$
and $t$. Let the conditional probability of $\tilde{\theta}$ given
$X_{1},\ldots,X_{n}$ be $p\left(x\right)=\prob\left(\tilde{\theta}=1\mid X_{1}=x_{1},\ldots,X_{n}=x_{n}\right)$.
We need to show that $\psi\left(p\right)$ as a functional of $p\left(x\right)$
defined by \ref{eq:first-order-svm} is continuous. The main theorem
we rely on is Theorem 3.1.2 in \cite{lebedev2003functional}. Next,
we will check the three conditions in that theorem.}

\textcolor{black}{Condition (i) is trivial. For verifying Condition
(ii), we first view $p\left(X\right)$ as an element from Banach space
$\left\{ p\left(x\right):\sup_{x}\left|p\left(x\right)\right|<\infty\right\} $,
with the norm $\sup_{x}\left|\cdot\right|$. Let 
\[
g\left(X,\psi,t\right)=\begin{cases}
2p\left(X\right)-1, & \psi^{T}X-t<1,\\
p\left(X\right), & \psi^{T}X-t\ge1,\\
p\left(X\right)-1, & \psi^{T}X-t\le-1.
\end{cases}
\]
Then the second term in \ref{eq:first-order-svm}, which is the only
term containing $p\left(X\right)$ can be written as 
\begin{eqnarray*}
 &  & E\left[\left(X,-1\right)^{T}\tilde{\theta}I_{\left\{ 1-\tilde{\theta}\left(\psi^{T}X-t\right)>0\right\} }\right]\\
 & = & E_{X}\left(\left(X,-1\right)^{T}E_{\tilde{\theta}\mid X}\left[\tilde{\theta}I_{\left\{ 1-\left(\psi X-t\right)\tilde{\theta}>0\right\} }\right]\right)\\
 & = & E_{X}\left\{ \left(X,-1\right)^{T}g\left(X,\psi,t\right)\right\} \\
 & = & \int_{\left\{ \psi^{T}X-t<1\right\} }\left(x,-1\right)^{T}\left(2p\left(x\right)-1\right)\diff F\left(x\right)+\int_{\left\{ \psi^{T}X-t\ge1\right\} }\left(x,-1\right)^{T}p\left(x\right)\diff F\left(x\right)\\
 &  & +\int_{\left\{ \psi^{T}X-t\le-1\right\} }\left(x,-1\right)^{T}\left(p\left(x\right)-1\right)\diff F\left(x\right),
\end{eqnarray*}
where $F\left(x\right)$ is the CDF of marginal distribution of $X_{1},\ldots,X_{n}$.
This is a continuous linear functional map $p\left(x\right)\mapsto E\left[\left(X,-1\right)^{T}\tilde{\theta}I_{\left\{ 1-\left(\psi X+t\right)\tilde{\theta}>0\right\} }\right]$.
So Condition (ii) holds. By Theorem 5 in \cite{li2011principal},
$D_{\left(\psi,t\right)}E\left\{ m\left(\psi,t,X,\tilde{\theta}\right)\right\} $
can be further differentiated around the solutions. Hence, Condition
(iii) holds. By Theorem 3.1.2 in \cite{lebedev2003functional}, $\psi$
is continuous in $p\left(x\right)$. }

\textcolor{black}{Let $\psi_{1}=\psi\left(\prob\left(\tilde{\theta}=1\mid X_{1},\ldots,X_{n}\right)\right)$,
$\psi_{2}=\psi\left(\prob\left(\tilde{\theta}=1\mid\Gamma_{n}^{T}\left(X_{1},\ldots,X_{n}\right)\right)\right)$,
while by Theorem 2 in \cite{li2011principal}, $\mathrm{span}\left(\psi_{2}\right)\subset\mathrm{span}\left(\Gamma_{n}\right)$.
Then for any $\varepsilon>0$, there exist an $N_{1}$ and a $\delta$
such that for any $n>N_{1}$, 
\begin{eqnarray*}
 &  & \left|\prob\left\{ \tilde{\theta}=1\mid\Gamma_{n}^{T}\left(X_{1},\ldots,X_{n}\right)\right\} -\prob\left(\tilde{\theta}=1\mid X_{1},\ldots,X_{n}\right)\right|\\
 & = & \left|\prob\left\{ \theta\le s\mid\Gamma_{n}^{T}\left(X_{1},\ldots,X_{n}\right)\right\} -\prob\left(\theta\le s\mid X_{1},\ldots,X_{n}\right)\right|<\delta,
\end{eqnarray*}
and hence $\left|\psi_{1}-\psi_{2}\right|<\varepsilon$. By Theorem
6 in \cite{li2011principal}, the sample version of normal vectors
$\hat{\psi}_{1}$ is weakly consistent to the population ones. Hence
there exists an $N_{2}\ge N_{1}$, such that for any $n>N_{2},$
\begin{eqnarray*}
 &  & \prob\left(\left|\hat{\psi}_{1}-\psi_{2}\right|\ge2\varepsilon\right)\\
 & \le & \prob\left(\left\{ \left|\hat{\psi}_{1}-\psi_{1}\right|\ge\varepsilon\right\} \cup\left\{ \left|\psi_{1}-\psi_{2}\right|\ge\varepsilon\right\} \right)\\
 & = & \prob\left(\left|\hat{\psi}_{1}-\psi_{1}\right|\ge\varepsilon\right)\le\eta.
\end{eqnarray*}
Hence $\hat{\psi}_{1}-\psi_{2}\overset{p}{\rightarrow}0.$ By Theorem
1 in \cite{bura2008distribution}, $\hat{\Gamma}_{n}$ and $\Gamma_{n}$
being eigenvectors satisfy $\hat{\Gamma}_{n}-\Gamma_{n}\overset{p}{\rightarrow}0$. }
\end{proof}

\section*{Acknowledgement}
 Ghosh's research was partially supported by an NSF Grant.

\bibliography{partialPostBib}

\begin{thebibliography}{38}
% BibTex style file: imsart-nameyear.bst, 2013-01-28
% Default style options (sort=1,type=nameyear).
% Used options (sort=1,type=nameyear).

\bibitem[\protect\citeauthoryear{Blum et~al.}{2013}]{blum2013comparative}
\begin{barticle}[author]
\bauthor{\bsnm{Blum},~\bfnm{Michael~GB}\binits{M.~G.}},
  \bauthor{\bsnm{Nunes},~\bfnm{Maria~Antonieta}\binits{M.~A.}},
  \bauthor{\bsnm{Prangle},~\bfnm{Dennis}\binits{D.}},
  \bauthor{\bsnm{Sisson},~\bfnm{Scott~A}\binits{S.~A.}} \betal{et~al.}
(\byear{2013}).
\btitle{A comparative review of dimension reduction methods in approximate
  Bayesian computation}.
\bjournal{Statistical Science}
\bvolume{28}
\bpages{189--208}.
\end{barticle}
\endbibitem

\bibitem[\protect\citeauthoryear{Bura and
  Pfeiffer}{2008}]{bura2008distribution}
\begin{barticle}[author]
\bauthor{\bsnm{Bura},~\bfnm{E}\binits{E.}} \AND
  \bauthor{\bsnm{Pfeiffer},~\bfnm{R}\binits{R.}}
(\byear{2008}).
\btitle{On the distribution of the left singular vectors of a random matrix and
  its applications}.
\bjournal{Statistics \& Probability Letters}
\bvolume{78}
\bpages{2275--2280}.
\end{barticle}
\endbibitem

\bibitem[\protect\citeauthoryear{Cook}{1998}]{cook1998principal}
\begin{barticle}[author]
\bauthor{\bsnm{Cook},~\bfnm{R~Dennis}\binits{R.~D.}}
(\byear{1998}).
\btitle{Principal Hessian directions revisited}.
\bjournal{Journal of the American Statistical Association}
\bvolume{93}
\bpages{84--94}.
\end{barticle}
\endbibitem

\bibitem[\protect\citeauthoryear{Cook and Li}{2002}]{cook2002dimension}
\begin{barticle}[author]
\bauthor{\bsnm{Cook},~\bfnm{R~Dennis}\binits{R.~D.}} \AND
  \bauthor{\bsnm{Li},~\bfnm{Bing}\binits{B.}}
(\byear{2002}).
\btitle{Dimension reduction for conditional mean in regression}.
\bjournal{Annals of Statistics}
\bpages{455--474}.
\end{barticle}
\endbibitem

\bibitem[\protect\citeauthoryear{Cook and Weisberg}{1991}]{1991}
\begin{barticle}[author]
\bauthor{\bsnm{Cook},~\bfnm{R.~Dennis}\binits{R.~D.}} \AND
  \bauthor{\bsnm{Weisberg},~\bfnm{Sanford}\binits{S.}}
(\byear{1991}).
\btitle{Sliced Inverse Regression for Dimension Reduction: Comment}.
\bjournal{Journal of the American Statistical Association}
\bvolume{86}
\bpages{pp. 328-332}.
\end{barticle}
\endbibitem

\bibitem[\protect\citeauthoryear{Dean et~al.}{2014}]{dean2014parameter}
\begin{barticle}[author]
\bauthor{\bsnm{Dean},~\bfnm{Thomas~A}\binits{T.~A.}},
  \bauthor{\bsnm{Singh},~\bfnm{Sumeetpal~S}\binits{S.~S.}},
  \bauthor{\bsnm{Jasra},~\bfnm{Ajay}\binits{A.}} \AND
  \bauthor{\bsnm{Peters},~\bfnm{Gareth~W}\binits{G.~W.}}
(\byear{2014}).
\btitle{Parameter estimation for hidden Markov models with intractable
  likelihoods}.
\bjournal{Scandinavian Journal of Statistics}
\bvolume{41}
\bpages{970--987}.
\end{barticle}
\endbibitem

\bibitem[\protect\citeauthoryear{Fearnhead and
  Prangle}{2012}]{fearnhead2012constructing}
\begin{barticle}[author]
\bauthor{\bsnm{Fearnhead},~\bfnm{Paul}\binits{P.}} \AND
  \bauthor{\bsnm{Prangle},~\bfnm{Dennis}\binits{D.}}
(\byear{2012}).
\btitle{Constructing summary statistics for approximate Bayesian computation:
  semi-automatic approximate Bayesian computation}.
\bjournal{Journal of the Royal Statistical Society: Series B (Statistical
  Methodology)}
\bvolume{74}
\bpages{419--474}.
\end{barticle}
\endbibitem

\bibitem[\protect\citeauthoryear{Ghosh and Liu}{2011}]{ghosh2011moment}
\begin{barticle}[author]
\bauthor{\bsnm{Ghosh},~\bfnm{Malay}\binits{M.}} \AND
  \bauthor{\bsnm{Liu},~\bfnm{Ruitao}\binits{R.}}
(\byear{2011}).
\btitle{Moment matching priors}.
\bjournal{Sankhya A}
\bvolume{73}
\bpages{185--201}.
\end{barticle}
\endbibitem

\bibitem[\protect\citeauthoryear{Grelaud et~al.}{2009}]{grelaud2009abc}
\begin{barticle}[author]
\bauthor{\bsnm{Grelaud},~\bfnm{Aude}\binits{A.}},
  \bauthor{\bsnm{Robert},~\bfnm{Christian~P}\binits{C.~P.}},
  \bauthor{\bsnm{Marin},~\bfnm{Jean-Michel}\binits{J.-M.}},
  \bauthor{\bsnm{Rodolphe},~\bfnm{Fran{\c{c}}ois}\binits{F.}},
  \bauthor{\bsnm{Taly},~\bfnm{Jean-Fran{\c{c}}ois}\binits{J.-F.}}
  \betal{et~al.}
(\byear{2009}).
\btitle{ABC likelihood-free methods for model choice in Gibbs random fields}.
\bjournal{Bayesian Analysis}
\bvolume{4}
\bpages{317--335}.
\end{barticle}
\endbibitem

\bibitem[\protect\citeauthoryear{Jaakkola and
  Jordan}{2000}]{jaakkola2000bayesian}
\begin{barticle}[author]
\bauthor{\bsnm{Jaakkola},~\bfnm{Tommi~S}\binits{T.~S.}} \AND
  \bauthor{\bsnm{Jordan},~\bfnm{Michael~I}\binits{M.~I.}}
(\byear{2000}).
\btitle{Bayesian parameter estimation via variational methods}.
\bjournal{Statistics and Computing}
\bvolume{10}
\bpages{25--37}.
\end{barticle}
\endbibitem

\bibitem[\protect\citeauthoryear{Jasra et~al.}{2012}]{jasra2012filtering}
\begin{barticle}[author]
\bauthor{\bsnm{Jasra},~\bfnm{Ajay}\binits{A.}},
  \bauthor{\bsnm{Singh},~\bfnm{Sumeetpal~S}\binits{S.~S.}},
  \bauthor{\bsnm{Martin},~\bfnm{James~S}\binits{J.~S.}} \AND
  \bauthor{\bsnm{McCoy},~\bfnm{Emma}\binits{E.}}
(\byear{2012}).
\btitle{Filtering via approximate Bayesian computation}.
\bjournal{Statistics and Computing}
\bvolume{22}
\bpages{1223--1237}.
\end{barticle}
\endbibitem

\bibitem[\protect\citeauthoryear{Johnson}{1970}]{johnson1970asymptotic}
\begin{barticle}[author]
\bauthor{\bsnm{Johnson},~\bfnm{Richard~A}\binits{R.~A.}}
(\byear{1970}).
\btitle{Asymptotic expansions associated with posterior distributions}.
\bjournal{The Annals of Mathematical Statistics}
\bvolume{41}
\bpages{851--864}.
\end{barticle}
\endbibitem

\bibitem[\protect\citeauthoryear{Joyce and
  Marjoram}{2008}]{joyce2008approximately}
\begin{barticle}[author]
\bauthor{\bsnm{Joyce},~\bfnm{Paul}\binits{P.}} \AND
  \bauthor{\bsnm{Marjoram},~\bfnm{Paul}\binits{P.}}
(\byear{2008}).
\btitle{Approximately sufficient statistics and Bayesian computation}.
\bjournal{Statistical applications in genetics and molecular biology}
\bvolume{7}.
\end{barticle}
\endbibitem

\bibitem[\protect\citeauthoryear{Lebedev and
  Vorovich}{2003}]{lebedev2003functional}
\begin{barticle}[author]
\bauthor{\bsnm{Lebedev},~\bfnm{L}\binits{L.}} \AND
  \bauthor{\bsnm{Vorovich},~\bfnm{I}\binits{I.}}
(\byear{2003}).
\btitle{Functional analysis in mechanics. Revised and extended translation of
  the Russian edition}.
\bjournal{Springer Monographs in Mathematics. New York, NY: Springer. xi}.
\end{barticle}
\endbibitem

\bibitem[\protect\citeauthoryear{Li}{1991}]{li1991sliced}
\begin{barticle}[author]
\bauthor{\bsnm{Li},~\bfnm{Ker-Chau}\binits{K.-C.}}
(\byear{1991}).
\btitle{Sliced inverse regression for dimension reduction}.
\bjournal{Journal of the American Statistical Association}
\bvolume{86}
\bpages{316--327}.
\end{barticle}
\endbibitem

\bibitem[\protect\citeauthoryear{Li}{1992}]{li1992principal}
\begin{barticle}[author]
\bauthor{\bsnm{Li},~\bfnm{Ker-Chau}\binits{K.-C.}}
(\byear{1992}).
\btitle{On principal Hessian directions for data visualization and dimension
  reduction: another application of Stein's lemma}.
\bjournal{Journal of the American Statistical Association}
\bvolume{87}
\bpages{1025--1039}.
\end{barticle}
\endbibitem

\bibitem[\protect\citeauthoryear{Li, Artemiou and Li}{2011}]{li2011principal}
\begin{barticle}[author]
\bauthor{\bsnm{Li},~\bfnm{Bing}\binits{B.}},
  \bauthor{\bsnm{Artemiou},~\bfnm{Andreas}\binits{A.}} \AND
  \bauthor{\bsnm{Li},~\bfnm{Lexin}\binits{L.}}
(\byear{2011}).
\btitle{Principal support vector machines for linear and nonlinear sufficient
  dimension reduction}.
\bjournal{The Annals of Statistics}
\bpages{3182--3210}.
\end{barticle}
\endbibitem

\bibitem[\protect\citeauthoryear{Li and Dong}{2009}]{li2009dimension}
\begin{barticle}[author]
\bauthor{\bsnm{Li},~\bfnm{Bing}\binits{B.}} \AND
  \bauthor{\bsnm{Dong},~\bfnm{Yuexiao}\binits{Y.}}
(\byear{2009}).
\btitle{Dimension reduction for nonelliptically distributed predictors}.
\bjournal{The Annals of Statistics}
\bpages{1272--1298}.
\end{barticle}
\endbibitem

\bibitem[\protect\citeauthoryear{Li, Zha and Chiaromonte}{2005}]{li2005contour}
\begin{barticle}[author]
\bauthor{\bsnm{Li},~\bfnm{Bing}\binits{B.}},
  \bauthor{\bsnm{Zha},~\bfnm{Hongyuan}\binits{H.}} \AND
  \bauthor{\bsnm{Chiaromonte},~\bfnm{Francesca}\binits{F.}}
(\byear{2005}).
\btitle{Contour regression: a general approach to dimension reduction}.
\bjournal{Annals of statistics}
\bpages{1580--1616}.
\end{barticle}
\endbibitem

\bibitem[\protect\citeauthoryear{Marin et~al.}{2014}]{marin2014relevant}
\begin{barticle}[author]
\bauthor{\bsnm{Marin},~\bfnm{Jean-Michel}\binits{J.-M.}},
  \bauthor{\bsnm{Pillai},~\bfnm{Natesh~S}\binits{N.~S.}},
  \bauthor{\bsnm{Robert},~\bfnm{Christian~P}\binits{C.~P.}} \AND
  \bauthor{\bsnm{Rousseau},~\bfnm{Judith}\binits{J.}}
(\byear{2014}).
\btitle{Relevant statistics for Bayesian model choice}.
\bjournal{Journal of the Royal Statistical Society: Series B (Statistical
  Methodology)}
\bvolume{76}
\bpages{833--859}.
\end{barticle}
\endbibitem

\bibitem[\protect\citeauthoryear{Marjoram et~al.}{2003}]{marjoram2003markov}
\begin{barticle}[author]
\bauthor{\bsnm{Marjoram},~\bfnm{Paul}\binits{P.}},
  \bauthor{\bsnm{Molitor},~\bfnm{John}\binits{J.}},
  \bauthor{\bsnm{Plagnol},~\bfnm{Vincent}\binits{V.}} \AND
  \bauthor{\bsnm{Tavar{\'e}},~\bfnm{Simon}\binits{S.}}
(\byear{2003}).
\btitle{Markov chain Monte Carlo without likelihoods}.
\bjournal{Proceedings of the National Academy of Sciences}
\bvolume{100}
\bpages{15324--15328}.
\end{barticle}
\endbibitem

\bibitem[\protect\citeauthoryear{Martin and Liu}{2013}]{martin2013inferential}
\begin{barticle}[author]
\bauthor{\bsnm{Martin},~\bfnm{Ryan}\binits{R.}} \AND
  \bauthor{\bsnm{Liu},~\bfnm{Chuanhai}\binits{C.}}
(\byear{2013}).
\btitle{Inferential models: A framework for prior-free posterior probabilistic
  inference}.
\bjournal{Journal of the American Statistical Association}
\bvolume{108}
\bpages{301--313}.
\end{barticle}
\endbibitem

\bibitem[\protect\citeauthoryear{Martin and Liu}{2015}]{martin2015conditional}
\begin{barticle}[author]
\bauthor{\bsnm{Martin},~\bfnm{Ryan}\binits{R.}} \AND
  \bauthor{\bsnm{Liu},~\bfnm{Chuanhai}\binits{C.}}
(\byear{2015}).
\btitle{Conditional inferential models: combining information for prior-free
  probabilistic inference}.
\bjournal{Journal of the Royal Statistical Society: Series B (Statistical
  Methodology)}
\bvolume{77}
\bpages{195--217}.
\end{barticle}
\endbibitem

\bibitem[\protect\citeauthoryear{Picchini}{2014}]{picchini2014inference}
\begin{barticle}[author]
\bauthor{\bsnm{Picchini},~\bfnm{Umberto}\binits{U.}}
(\byear{2014}).
\btitle{Inference for SDE models via approximate Bayesian computation}.
\bjournal{Journal of Computational and Graphical Statistics}
\bvolume{23}
\bpages{1080--1100}.
\end{barticle}
\endbibitem

\bibitem[\protect\citeauthoryear{Prangle et~al.}{2014}]{prangle2014semi}
\begin{barticle}[author]
\bauthor{\bsnm{Prangle},~\bfnm{Dennis}\binits{D.}},
  \bauthor{\bsnm{Fearnhead},~\bfnm{Paul}\binits{P.}},
  \bauthor{\bsnm{Cox},~\bfnm{Murray~P}\binits{M.~P.}},
  \bauthor{\bsnm{Biggs},~\bfnm{Patrick~J}\binits{P.~J.}} \AND
  \bauthor{\bsnm{French},~\bfnm{Nigel~P}\binits{N.~P.}}
(\byear{2014}).
\btitle{Semi-automatic selection of summary statistics for ABC model choice}.
\bjournal{Statistical applications in genetics and molecular biology}
\bvolume{13}
\bpages{67--82}.
\end{barticle}
\endbibitem

\bibitem[\protect\citeauthoryear{Pritchard
  et~al.}{1999}]{pritchard1999population}
\begin{barticle}[author]
\bauthor{\bsnm{Pritchard},~\bfnm{Jonathan~K}\binits{J.~K.}},
  \bauthor{\bsnm{Seielstad},~\bfnm{Mark~T}\binits{M.~T.}},
  \bauthor{\bsnm{Perez-Lezaun},~\bfnm{Anna}\binits{A.}} \AND
  \bauthor{\bsnm{Feldman},~\bfnm{Marcus~W}\binits{M.~W.}}
(\byear{1999}).
\btitle{Population growth of human Y chromosomes: a study of Y chromosome
  microsatellites.}
\bjournal{Molecular biology and evolution}
\bvolume{16}
\bpages{1791--1798}.
\end{barticle}
\endbibitem

\bibitem[\protect\citeauthoryear{Rivoirard
  et~al.}{2012}]{rivoirard2012bernstein}
\begin{barticle}[author]
\bauthor{\bsnm{Rivoirard},~\bfnm{Vincent}\binits{V.}},
  \bauthor{\bsnm{Rousseau},~\bfnm{Judith}\binits{J.}} \betal{et~al.}
(\byear{2012}).
\btitle{Bernstein--von Mises theorem for linear functionals of the density}.
\bjournal{The Annals of Statistics}
\bvolume{40}
\bpages{1489--1523}.
\end{barticle}
\endbibitem

\bibitem[\protect\citeauthoryear{Robert and Casella}{2013}]{robert2013monte}
\begin{bbook}[author]
\bauthor{\bsnm{Robert},~\bfnm{Christian}\binits{C.}} \AND
  \bauthor{\bsnm{Casella},~\bfnm{George}\binits{G.}}
(\byear{2013}).
\btitle{Monte Carlo statistical methods}.
\bpublisher{Springer Science \& Business Media}.
\end{bbook}
\endbibitem

\bibitem[\protect\citeauthoryear{Rubin}{1984}]{rubin1984bayesianly}
\begin{barticle}[author]
\bauthor{\bsnm{Rubin},~\bfnm{Donald~B}\binits{D.~B.}}
(\byear{1984}).
\btitle{Bayesianly justifiable and relevant frequency calculations for the
  applied statistician}.
\bjournal{The Annals of Statistics}
\bvolume{12}
\bpages{1151--1172}.
\end{barticle}
\endbibitem

\bibitem[\protect\citeauthoryear{Su and Cook}{2011}]{su2011partial}
\begin{barticle}[author]
\bauthor{\bsnm{Su},~\bfnm{Zhihua}\binits{Z.}} \AND
  \bauthor{\bsnm{Cook},~\bfnm{R~Dennis}\binits{R.~D.}}
(\byear{2011}).
\btitle{Partial envelopes for efficient estimation in multivariate linear
  regression}.
\bjournal{Biometrika}
\bpages{asq063}.
\end{barticle}
\endbibitem

\bibitem[\protect\citeauthoryear{Su and Cook}{2012}]{su2012inner}
\begin{barticle}[author]
\bauthor{\bsnm{Su},~\bfnm{Zhihua}\binits{Z.}} \AND
  \bauthor{\bsnm{Cook},~\bfnm{R~Dennis}\binits{R.~D.}}
(\byear{2012}).
\btitle{Inner envelopes: efficient estimation in multivariate linear
  regression}.
\bjournal{Biometrika}
\bvolume{99}
\bpages{687--702}.
\end{barticle}
\endbibitem

\bibitem[\protect\citeauthoryear{Tavar{\'e} et~al.}{1997}]{tavare1997inferring}
\begin{barticle}[author]
\bauthor{\bsnm{Tavar{\'e}},~\bfnm{Simon}\binits{S.}},
  \bauthor{\bsnm{Balding},~\bfnm{David~J}\binits{D.~J.}},
  \bauthor{\bsnm{Griffiths},~\bfnm{Robert~C}\binits{R.~C.}} \AND
  \bauthor{\bsnm{Donnelly},~\bfnm{Peter}\binits{P.}}
(\byear{1997}).
\btitle{Inferring coalescence times from DNA sequence data}.
\bjournal{Genetics}
\bvolume{145}
\bpages{505--518}.
\end{barticle}
\endbibitem

\bibitem[\protect\citeauthoryear{Tierney and
  Kadane}{1986}]{tierney1986accurate}
\begin{barticle}[author]
\bauthor{\bsnm{Tierney},~\bfnm{Luke}\binits{L.}} \AND
  \bauthor{\bsnm{Kadane},~\bfnm{Joseph~B}\binits{J.~B.}}
(\byear{1986}).
\btitle{Accurate approximations for posterior moments and marginal densities}.
\bjournal{Journal of the american statistical association}
\bvolume{81}
\bpages{82--86}.
\end{barticle}
\endbibitem

\bibitem[\protect\citeauthoryear{Wilkinson}{2011}]{wilkinson2011stochastic}
\begin{bbook}[author]
\bauthor{\bsnm{Wilkinson},~\bfnm{Darren~J}\binits{D.~J.}}
(\byear{2011}).
\btitle{Stochastic modelling for systems biology}.
\bpublisher{CRC press}.
\end{bbook}
\endbibitem

\bibitem[\protect\citeauthoryear{Wilkinson}{2013}]{wilkinson2013approximate}
\begin{barticle}[author]
\bauthor{\bsnm{Wilkinson},~\bfnm{Richard~David}\binits{R.~D.}}
(\byear{2013}).
\btitle{Approximate Bayesian computation (ABC) gives exact results under the
  assumption of model error}.
\bjournal{Statistical applications in genetics and molecular biology}
\bvolume{12}
\bpages{129--141}.
\end{barticle}
\endbibitem

\bibitem[\protect\citeauthoryear{Wu}{2008}]{wu2008kernel}
\begin{barticle}[author]
\bauthor{\bsnm{Wu},~\bfnm{Han-Ming}\binits{H.-M.}}
(\byear{2008}).
\btitle{Kernel sliced inverse regression with applications to classification}.
\bjournal{Journal of Computational and Graphical Statistics}
\bvolume{17}.
\end{barticle}
\endbibitem

\bibitem[\protect\citeauthoryear{Xia et~al.}{2002}]{xia2002adaptive}
\begin{barticle}[author]
\bauthor{\bsnm{Xia},~\bfnm{Yingcun}\binits{Y.}},
  \bauthor{\bsnm{Tong},~\bfnm{Howell}\binits{H.}},
  \bauthor{\bsnm{Li},~\bfnm{WK}\binits{W.}} \AND
  \bauthor{\bsnm{Zhu},~\bfnm{Li-Xing}\binits{L.-X.}}
(\byear{2002}).
\btitle{An adaptive estimation of dimension reduction space}.
\bjournal{Journal of the Royal Statistical Society: Series B (Statistical
  Methodology)}
\bvolume{64}
\bpages{363--410}.
\end{barticle}
\endbibitem

\bibitem[\protect\citeauthoryear{Yeh, Huang and Lee}{2009}]{yeh2009nonlinear}
\begin{barticle}[author]
\bauthor{\bsnm{Yeh},~\bfnm{Yi-Ren}\binits{Y.-R.}},
  \bauthor{\bsnm{Huang},~\bfnm{Su-Yun}\binits{S.-Y.}} \AND
  \bauthor{\bsnm{Lee},~\bfnm{Yuh-Jye}\binits{Y.-J.}}
(\byear{2009}).
\btitle{Nonlinear dimension reduction with kernel sliced inverse regression}.
\bjournal{Knowledge and Data Engineering, IEEE Transactions on}
\bvolume{21}
\bpages{1590--1603}.
\end{barticle}
\endbibitem

\end{thebibliography}
\bibliographystyle{imsart-nameyear}

% AOS,AOAS: If there are supplements please fill:
%\begin{supplement}[id=suppA]
%  \sname{Supplement A}
%  \stitle{Title}
%  \slink[doi]{10.1214/00-AOASXXXXSUPP}
%  \sdatatype{.pdf}" 
%  \sdescription{Some text}
%\end{supplement}

\end{document}